\documentclass[a4paper]{cas-dc}

\usepackage[numbers]{natbib}

\def\tsc#1{\csdef{#1}{\textsc{\lowercase{#1}}\xspace}}
\tsc{WGM}
\tsc{QE}

\newproof{pf}{\textbf{Proof}}

\newtheorem{rem}{Remark}
\newtheorem{assum}{Assumption}
\newtheorem{thm}{Theorem}
\newtheorem{lem}{Lemma}
\newtheorem{prop}{Proposition}

\usepackage{subcaption}
 
\usepackage{graphicx}
\graphicspath{{figures/}}

\begin{document}
\let\WriteBookmarks\relax
\def\floatpagepagefraction{1}
\def\textpagefraction{.001}

\shorttitle{}    

\shortauthors{}  

\title [mode = title]{Stabilization of First-Order Partial Integro-Differential Equations with Concurrent Input and State Delays}  

\tnotemark[1] 

\tnotetext[1]{This work was supported by the National Natural Science Foundation of China (62173084), the Project of Science and Technology Commission of Shanghai Municipality, China (23ZR1401800), the Fundamental Research Funds for the Central Universities (CUSF-DH-T-2023026).}

\author[1]{Sanguan Zhong}

\ead{zsgguan@mail.dhu.edu.cn}

\affiliation[1]{organization={School of Information and Intelligent Science, Donghua University},
            city={Shanghai},
            postcode={201620}, 
            country={China}}

\author[1]{Jie Qi}

\cormark[1]

\ead{jieqi@dhu.edu.cn}

\cortext[1]{Corresponding author}

\begin{abstract}
This paper considers boundary stabilization problems for a first-order hyperbolic partial integro-differential equation (PIDE) subject to concurrent input and state delays. The coexistence of these two types of delays complicates control design, especially under the case of large input delay that requires to predict more state information. A backstepping-based boundary controller is developed to achieve stabilization and delay compensation. The design relies on two affine Volterra transformations involving both Fredholm- and Volterra-type integral terms, which results in a four-PIDE kernel equations. To establish their well-posedness, the kernel domain is partitioned along characteristic lines into a finite number of triangular subregions, and the kernel solution is constructed successively in the preceding subregion to the next one. The finite-time stability of the resulting closed-loop system is established. Numerical simulations are provided to demonstrate the effectiveness of the proposed controller. 
\end{abstract}

\begin{keywords}
 First-order Partial Integro-Differential Equations\sep Input and State delays\sep Backstepping\sep Stabilization
\end{keywords}

\maketitle

\section{Introduction}
 
    First-order hyperbolic partial integro-differential equations (PIDEs) frequently arise in engineering applications, including traffic flows \cite{zhang2022boundary}, heat exchangers \cite{xu2002exponential,ghousein2020adaptive}, pipe flows \cite{ghidaoui2005review,housiadas2024pressure}, and chemical reaction kinetics in plug flow reactors \cite{steinboeck2016real,karafyllis2002control,xu2018optimal,bastin2016stability}.  In such processes, time delays commonly emerge due to the transport of mass, energy, or information. These delays may appear not only in control inputs but also in system states and measurements. For instance, in plug flow reactors with recycle, a portion of the reactor effluent is recirculated as a feedback stream, introducing state delays \cite{qi2021output,oh1974study}. However, time delays are well known to degrade system stability \cite{guo2008boundary}, thereby necessitating appropriate compensation strategies. While progress has been made in control of ordinary differential equations (ODEs) with both input and state delays (see \cite{kharitonov2017prediction}), analogous results for partial differential equations (PDEs) remain unexplored.
    
    A wide range of studies have investigated the control of hyperbolic partial differential equations (PDEs), sometimes in conjunction with ODEs. The pioneering application of the backstepping method to first-order linear PIDE is presented in \cite{krstic2008backstepping}, followed by its extension to $2\times 2$ first-order linear hyperbolic systems in \cite{vazquez2011backstepping}. The quasilinear case is further explored in \cite{vazquez2011local,coron2013local}. Subsequent works address more general settings, including $n+1$ linear heterodirectional hyperbolic systems \cite{di2013stabilization}, homodirectional and general heterodirectional $n+m$ coupled linear systems \cite{hu2016control}, and even ensembles of $\infty+1$ coupled linear hyperbolic PDEs \cite{alleaume2025ensembles,humaloja2025stabilization}. In addition, related topics such as adaptive control and observer design \cite{bernard2014adaptive,anfinsen2019adaptive}, disturbance rejection \cite{aamo2013disturbance}, and output regulation \cite{deutscher2017backstepping,xu2023adaptive} have also been addressed. For more research on hyperbolic systems, the reader is referred to the recent survey \cite{vazquez2026backstepping} and the references therein.
    
    For time delayed PDE systems, various approaches have been proposed, including Lyapunov-Krasovskii functional method \cite{selivanov2018delayed}, spectral method \cite{lhachemi2021feedback,lhachemi2022predictor,lhachemi2023boundary}, and observer-based approach\cite{katz2021delayed}. Among these, the backstepping technique turns out to be effective to compensate for arbitrarily long delays, such as for constant input delay \cite{krstic2009control}, spatially varying input delay \cite{qi2021compensation}, state delay \cite{hashimoto2016stabilization}, unknown input delay \cite{wang2022delay}, and input and output delays \cite{deutscher2020fredholm}. For the first-order hyperbolic PDE, \cite{sano2019boundary} solves the stabilization problem with input delay, while a delay-adaptive boundary controller is developed for a $2\times2$ coupled hyperbolic PDE-ODE system with unknown input delay in \cite{wangji2024delay}. Other relevant results include \cite{auriol2018delay,qi2023delay,zhang2021compensation,qi2021output}. 

Most of the aforementioned contributions are restricted to systems involving either input delays or state delays alone. The simultaneous compensation of input and state delays remains a challenging problem in delay-system control, as discussed in \cite{krstic2009delay,richard2003time}. Although several related problems can be transformed into comparable forms \cite{redaud2022stabilizing}, they do not directly address the PIDE setting considered here. For example, \cite{jean2022explicit} develops an explicit prediction-based controller for linear finite-dimensional difference equations with a distributed state delay and an input delay. Its applicability, however, relies on a delay-ordering condition requiring the input delay to be no less than the horizon of the distributed delay. Their approach, remaining confined to the ODE framework, does not--at least in a straightforward manner--extend to the reverse delay configuration. 

In contrast, this paper addresses the concurrent input-and-state-delay problem directly within a PDE framework. By considering a first-order hyperbolic PIDE subject to both types of delays without considering the delay-ordering, the proposed design offers a possible approach to the stabilization of PDE systems with simultaneous input and state delays.

The main difficulty arises from the coupled effects of the two delays \cite{krstic2009delay}. In particular, the state-feedback law required to compensate for the input delay depends on delayed state information, while the state delay itself modifies the prediction structure needed for input-delay compensation. As a result, the two delays cannot be treated independently, and the standard backstepping transformation is no longer directly applicable. To address this issue, we introduce a pair of integral transformations, each involving both Fredholm- and Volterra-type integral terms.
This construction is different from the Fredholm-transformation approach in \cite{redaud2022stabilizing} where two preliminary Volterra transformations are applied, and the Fredholm transformation is then introduced as a second step.

The resulting kernel equations are coupled and considerably complex. Their well-posedness analysis is particularly challenging in the presence of a large input. In this case, conventional arguments based solely on the method of characteristics and successive approximations are no longer sufficient, because the enlarged integration domain prevents the direct convergence of the iterative series. To overcome this difficulty, we introduce a domain-partitioning strategy: the kernel domain is divided along characteristic lines into several triangular subregions, and the solution on each subregion is constructed successively by using the solution obtained on the preceding subregion as boundary data for the next one.

Furthermore, for the pair of coupled kernel equations appearing in the transformation associated with the delayed input, each governed by two boundary conditions, we divide the well-posedness analysis into two cases, depending on whether the input delay is smaller or larger than the state delay. The characteristics method and the infinite induction energy series method  \cite{chen2024backstepping}, are then employed, respectively, to treat these two cases.

Finally, we establish both exponential and finite-time stability of the closed-loop system under the proposed delay-compensating boundary controller, and a numerical simulation is presented to demonstrate the effectiveness of the developed control strategy.

    The main contributions of this paper are as follows:
    \begin{enumerate}
    \item 
    As discussed in \cite{krstic2009delay,richard2003time}, one single important open area in the control of delay systems is the control of systems with simultaneous input and state delays. To the best of the authors’ knowledge, this paper provides the first direct result on the stabilization of unstable PDE systems subject to both input and state delays.
    \item
    The kernel equations of the system with coexisting input and state delays exhibit enlarged integration domains as the input delay increases, under which the method of successive approximations fails to produce a convergent series. To overcome this difficulty, we propose a domain-partitioning strategy that divides the enlarged integration domain into multiple finite triangular subregions. In contrast to \cite{redaud2022stabilizing}, this approach yields explicit solutions expressed as convergent series.
    \end{enumerate}
    
    The rest of this paper is organized as follows. The problem formulation is shown in Section \ref{problem statement}. The backstepping boundary control design is presented in Section \ref{control design}, where we state the main result. The well-posedness of the kernel equations, which is the main technical difficulty of this paper, is proved in Section \ref{well-posedness analysis}. We perform closed-loop stability analysis in Section \ref{stability analysis}. The effectiveness of the proposed control is illustrated with a numerical example in Section \ref{numerical simulation}. Finally, Section \ref{conclusion} concludes this paper.
    
    \emph{Notation}: Define norm for $f\in L^2(0,1)$ and norm for $g\in L^\infty(0,1)$ as follows:
    \begin{align*}
        \|f\|&:=\left(\int_0^1 f^2(x)dx\right)^{\frac{1}{2}}, & \|g\|_\infty&:= \operatorname*{ess\,sup}_{x\in[0,1]} |g(x)|. 
    \end{align*}
    Define domains: $\mathcal{D}_1=\{(x,y)\in[0,1]^2: 0\le x\le y\le1\}$, $\mathcal{D}_2=\{(x,y)\in[0,1]^2\}$.

\section{Problem statement}\label{problem statement}
    Consider the first-order hyperbolic PIDE which is subject to boundary input delay $D_1>0$ and state delay $D_2>0$, shown in Fig. \ref{fig:system diagram},
    \begin{align}
    u_t(x,t)=&-u_x(x,t)+c(x)u(1,t-D_2)  \notag \\
    &\quad+\int_x^1f(x,y)u(y,t)dy, \label{eq: u} \\
    u(0,t)=&U(t-D_1). \label{eq: u boundary}
    \end{align}
    for $(x,t)\in (0,1)\times \mathbb{R}^+$ with the initial condition $u(x,0)=u_0\in L^2(0,1)$. The historical input is $U(\theta)={\vartheta}_1\in L^2(-D_1,0)$, and the historical state is $u(1,\theta)={\vartheta}_2\in L^2(-D_2,0)$. For large positive $c(x)$ and $f(x,y)$, the system is open-loop unstable \cite{krstic2008backstepping}. The objective is to design a boundary controller $U(t)$ that stabilizes the system while simultaneously compensating for both input and state delays.  
    \begin{figure}[htpb]
        \begin{center}
            \includegraphics[width=\columnwidth]{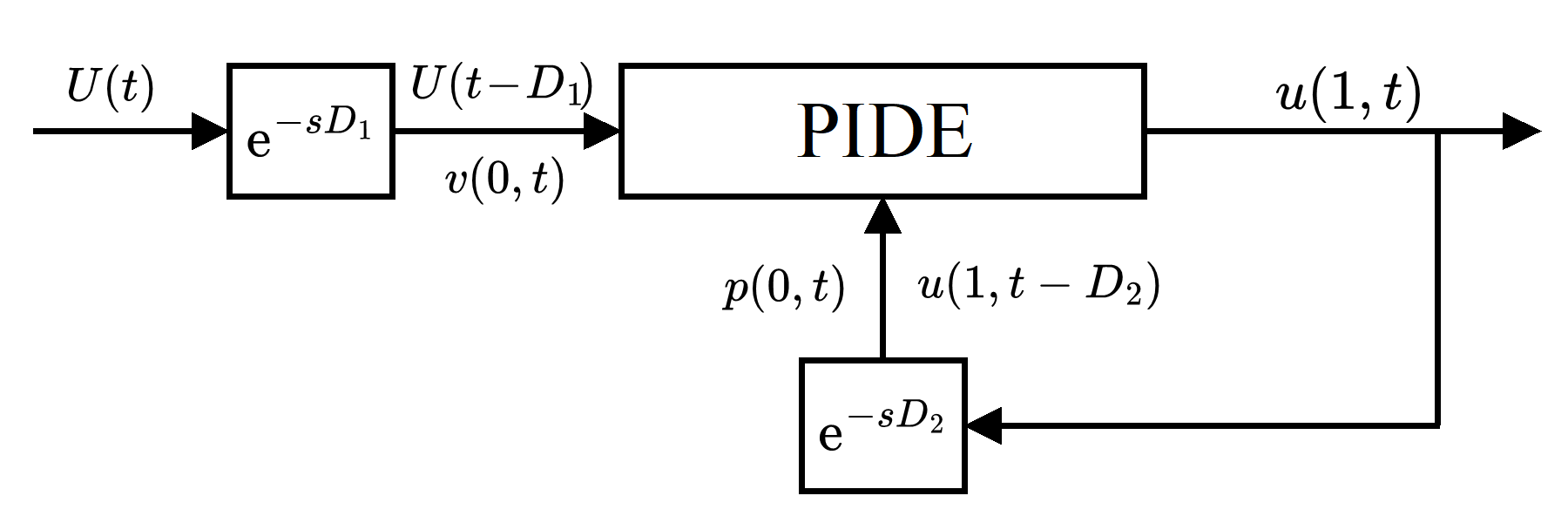}            
        \end{center}
        \caption{Schematic diagram of the system with delay dynamics.}
        \label{fig:system diagram}
    \end{figure}
    
    \begin{rem}
       This work is motivated by plug flow reactor systems, which provide a representative application prototype for the first-order hyperbolic PIDE considered in this paper. The spatial evolution of the reactant temperature is naturally governed by an advection-dominated PDE \cite{oh1974study,qi2021compensation,zhang2021compensation,sano2019boundary}. In practical reactor operation, the recycle of a part of the outlet stream introduces a state delay through finite-speed transport in the recycle line, while feed-line transportation lags and thermal inertia give rise to an input delay at the inlet boundary. 
    \end{rem}
    
    \begin{assum}\label{assumption}
        Assume that the coefficient functions $c\in C([0,1])$, $f\in C(\mathcal{D}_1)$, and $c$ satisfies $c(0)=c(1)=0$.
    \end{assum}
    \begin{rem}
       Setting the coefficient $c(x)$ to zero at both ends ensures the continuity of the kernel functions within the domain. For systems with input delay, this requires $c(0)=0$, consistent with \cite{sano2019boundary}, while for systems with state delay,  $c(1)=0$ is imposed, as in \cite{qi2021output}. However, this constraint on $c(x)$  can be relaxed to allow piecewise continuous kernel functions, as discussed in \cite{hu2016control,hu2019boundary}. A more detailed discussion is provided in Section~\ref{well-posedness analysis}.
    \end{rem}
    
    We introduce two transport equations with distinct transport speeds in the representation of the delayed input and the delayed state, respectively, leading to the following cascaded system:
    \begin{subequations}\label{extended system}
    \begin{align}
        u_{t}(x,t)=&\!-\!u_{x}(x,t)\!+\!c(x)p(0,t)\!+\!\int_{x}^{1}\!\!f(x,y)u(y,t)dy, \label{cas u}\\
        u(0,t)=&v(0,t),\label{cas u boundary}  \\
        D_1v_{t}(s,t)=&v_{s}(s,t),  \qquad s\in[0, 1), \label{cas v}\\
        v(1,t)=&U(t),\label{cas v boundary}\\
        D_2p_{t}(s,t)=&p_{s}(s,t),   \qquad s\in[0, 1), \label{cas p}\\
        p(1,t)=&u(1,t),\label{cas p boundary}
    \end{align}
    \end{subequations}
    with initial conditions $v(s,0)=v_0(s)$, $p(s,0)=p_0(s)$, which are related with the historical values $v_0(s)=\vartheta_1(D_1(s-1)) \in L^2((0,1);\mathbb{R})$, $p_0(s)=\vartheta_2(D_2(s-1))\in L^2((0,1);\mathbb{R})$.

\section{Backstepping control design}\label{control design}
\subsection{Backstepping transformation}
    Introduce the following backstepping transformation:
    \begin{align}
        w(x,t) =& u(x,t)-\int_{x}^{1} K(x,y)u(y,t)dy \notag \\
        &-\int_{0}^{1}F(x,y)p(y,t)dy, \label{u trans} \\
        z(s,t) =& v(s,t)-\int_0^sD_1J(s-y,0)v(y,t)dy \notag \\
        &\!-\!\!\int_{0}^{1}\! J(s,y)u(y,t)dy \!-\!\! \int_{0}^{1}\! G(s,y)p(y,t)dy, \label{v trans}
    \end{align}
    where the kernel function $K$ is defined on $\mathcal{D}_1$, and the kernel functions $F$,  $J$ and $G$ are defined on $\mathcal{D}_2$. 
    Note that the transformation involves both the Volterra and Fredholm integrals, reflecting the nature of the transformation associated with delayed systems. Since the state $p$ remains unchanged, the transformation $(p,u,v)\mapsto(p,w,z)$ is reformulated as a lower-triangular operator representation:
    \begin{align}\label{direct trans}
        \begin{bmatrix}
            p \\ w \\ z
        \end{bmatrix}\!\!=\!&
        \begin{bmatrix}
            I & {\mathbf 0} & \mathbf{ 0} \\
            \mathscr{F}_1 & \mathscr{L}_1+I & {\mathbf 0} \\
            \mathscr{F}_2 & \mathscr{F}_3 & \mathscr{L}_2+I
        \end{bmatrix}\!\!
        \begin{bmatrix}
            p \\ u \\ v
        \end{bmatrix}
    \end{align}
    where $I$ is the identity operator and 
    \begin{align*}
&\mathscr{F}_1[p] := -\int_{0}^{1} F(x, y) p(y,t)  dy, \\
&\mathscr{F}_2[p] := -\int_{0}^{1} G(s, y) p(y,t)  dy, \\
&\mathscr{F}_3[u] := -\int_{0}^{1} J(s, y) u(y,t) dy, \\
&\mathscr{L}_1[u] := -\int_{x}^{1} K(x, y) u(y,t)  dy, \\
&\mathscr{L}_2[v] := -\int_{0}^{s} D_1J(s-y,0) v(y,t)  dy.
    \end{align*}
    \begin{rem}
       The Volterra operators $\mathscr{L}_1$, $\mathscr{L}_2$ are lower-triangular, making the overall transformation $(p,u,v)\mapsto(p,w,z)$ lower-triangular and hence invertible \cite{krstic2009delay}.    
    \end{rem}
    
    The first transformation \eqref{u trans} is applied to eliminate the unstable integral term and the coupling term associated with state delay in \eqref{cas u}, while \eqref{v trans} is designed to compensate for the input delay, resulting in the boundary controller $U$. Using the transformation, the system \eqref{extended system} is converted to the following target system:
    \begin{subequations}\label{target system}
        \begin{align}
             w_{t}(x,t) =& -w_{x}(x,t), \label{tar w}\\
             w(0,t) =& z(0,t), \label{tar w boundary}\\
             D_1z_t(s,t) =& z_s(s,t), \label{tar z}\\
             z(1,t) =& 0, \label{tar z boundary}\\
             D_{2}p_{t}(s,t) =& p_{s}(s,t), \label{tar p}\\
             p(1,t) =& w(1,t).\label{tar p boundary}
        \end{align}
    \end{subequations}
    The following kernel equations are satisfied by the kernels $K$ and $F$,
    \begin{align}
        K_x(x,y) =&-K_{y}(x,y)+f(x,y)-\int_{x}^{y}K(x,\theta)f(\theta,y)d\theta, \label{K}\\
        K(x,1)  =& \frac{1}{D_{2}}F(x,1),\label{K(x,1)}\\
        F_x(x,y) =& \frac{1}{D_2}F_y(x,y), \label{F}\\        
        F(x,0) =& D_2\int_x^1K(x,y)c(y)dy - D_2c(x), \label{F(x,0)} \\
        F(1,y) =& 0. \label{F(1,y)}
    \end{align}
    and the kernels $J$ and $G$ satisfy
    \begin{align}
        J_{s}(s,y) =& D_{1}J_{y}(s,y)+D_{1}\int_{0}^{y}J(s,\theta)f(\theta,y)d\theta, \label{J}\\
        J(s,1) =& \frac{1}{D_{2}}G(s,1), \label{J(s,1)}\\
        J(0,y) =& K(0,y), \label{J(0,y)}\\
        G_s(s,y) =& -\frac{D_1}{D_2}G_y(s,y), \label{G}\\
        G(s,0) =& D_2\int_0^1 J(s,y)c(y)dy, \label{G(s,0)}\\
        G(0,y) =& F(0,y).\label{G(0,y)}
    \end{align}
    The well-posedness result of the kernel equations is provided in Section \ref{well-posedness analysis}. Note that the kernel functions associated with the delayed variables satisfy pure transport equations that are boundary-coupled with other kernel functions.

    \subsection{Main result}
    Incorporating the boundary conditions \eqref{cas v boundary}, \eqref{tar z boundary} and the transformation \eqref{v trans}, and recalling that $v(s,t)=U(t-D_1(1-s))$, $p(s,t)=u(1,t-D_2(1-s))$ after $t>\max\{D_1,D_2\}$, we obtain,
    \begin{align}
        U(t)=&\int_{0}^{1}J(1,y)u(y,t)dy\notag \\
        &+\int_{t-D_2}^{t}\frac{1}{D_2}G \left(1,1-\frac{t-\tau}{D_2}\right)u(1,\tau)d\tau \notag \\
        &+\int_{t-D_1}^{t}J\left(\frac{t-\tau}{D_1},0\right)U(\tau)d\tau. \label{controller}
    \end{align}
    
    The main result of this paper is stated in the following Theorem.
    \begin{thm}\label{main result}
        Consider the system \eqref{extended system} under the controller \eqref{controller} and assume  that the initial conditions $u_0$, $v_0$, $p_0\in L^2(0,1)$ are compatible with the boundary conditions. Then the resulting closed-loop system is exponentially stable in the $L^2$-sense around its equilibrium (the origin), i.e., there exist positive constants $C>0$, $\sigma>0$ such that 
        \begin{align}
            V(t)\le C\mathrm{e}^{-\sigma t}V(0),
        \end{align}
        where 
        \begin{align}
            V(t)=\|u(\cdot,t)\|^2+\|v(\cdot,t)\|^2+\|p(\cdot,t)\|^2. \label{V}
        \end{align}
        Furthermore, the system converges to the origin in finite time 
        \begin{align}
            t_\mathrm{F}=1+D_1+D_2.  \label{tF}
        \end{align}
    \end{thm}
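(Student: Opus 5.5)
The plan is the standard backstepping route. We first prove stability of the target system \eqref{target system}. We then transfer the estimate to the original variables through the invertible transformation \eqref{direct trans}. Throughout we take for granted the well-posedness of the kernel equations \eqref{K}--\eqref{G(0,y)} (to be established in Section~\ref{well-posedness analysis}), in particular that $K,F,J,G$ are bounded (continuous) on their domains.

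\textbf{Step 1: finite-time convergence of the target system.} The target system is a cascade of pure transport equations, $z\to w\to p$, with $z(1,t)=0$.
\begin{itemize}
\item Since $z$ moves from $s=1$ to $s=0$ with speed $1/D_1$, we have $z\equiv 0$ for $t\ge D_1$.
\item Then $w(0,t)=z(0,t)=0$ for $t\ge D_1$. Since $w$ travels with unit speed, $w\equiv0$ for $t\ge 1+D_1$.
\item Then $p(1,t)=0$ for $t\ge 1+D_1$, so $p\equiv0$ for $t\ge 1+D_1+D_2=t_{\mathrm F}$.
\end{itemize}
For the exponential estimate we use the weighted Lyapunov functional
\[
W(t)=a\int_0^1 e^{\delta s}z^2\,ds+\int_0^1 e^{-\delta x}w^2\,dx+b\int_0^1 e^{\delta s}p^2\,ds .
\]
Differentiating along \eqref{tar w}--\eqref{tar p boundary} and integrating by parts gives boundary terms $-\frac{a}{D_1}z^2(0,t)+w^2(0,t)-e^{-\delta}w^2(1,t)+\frac{b}{D_2}e^{\delta}w^2(1,t)$ (the terms $z^2(1,t)=0$ drop). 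Choosing $a\ge D_1$ and $b\le D_2e^{-2\delta}$ makes the sum of these boundary terms nonpositive. We are then left with $\dot W\le-\sigma W$, where $\sigma>0$ depends on $\delta,D_1,D_2$. Alternatively, the exponential bound follows trivially from finite-time extinction combined with a bound $W(t)\le M W(0)$ on $[0,t_{\mathrm F}]$, which the same computation also gives.

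\textbf{Step 2: equivalence of norms.} The direct transformation \eqref{direct trans} is bounded on $L^2$, since its kernels are bounded; this gives $W\le \alpha_1 V$. Its inverse is again lower triangular:
\begin{itemize}
\item $p=p$;
\item $u=(I+\mathscr L_1)^{-1}(w-\mathscr F_1 p)$;
\item $v=(I+\mathscr L_2)^{-1}(z-\mathscr F_2p-\mathscr F_3u)$.
\end{itemize}
The Volterra operators $I+\mathscr L_i$ have bounded inverses, given by Volterra operators whose kernels are bounded resolvent kernels constructed by successive approximation. Hence $V\le\alpha_2 W$, and combining the two inequalities yields $V(t)\le C e^{-\sigma t}V(0)$. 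Finite-time convergence of $(u,v,p)$ at $t_{\mathrm F}$ follows at once, since the inverse maps zero to zero.

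\textbf{Step 3: checking the target equations.} The step requiring the most care is verifying that \eqref{u trans}--\eqref{v trans} really map \eqref{extended system} with controller \eqref{controller} to \eqref{target system}.
\begin{itemize}
\item For \eqref{u trans} we differentiate in $t$, substitute \eqref{cas u} and \eqref{cas p}, and integrate by parts in $y$. The term $u(1)F$-related boundary terms are then cancelled by \eqref{K(x,1)} and \eqref{cas p boundary}; the $p(0)$ terms by \eqref{F(x,0)}; the integral terms by \eqref{K} and \eqref{F}; and $F(1,\cdot)=0$ gives $w(1)=u(1)$, hence \eqref{tar p boundary}.
\item The same calculation for \eqref{v trans}, using \eqref{J}--\eqref{G(0,y)}, yields \eqref{tar z}. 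Evaluating \eqref{v trans} at $s=0$ with \eqref{J(0,y)} and \eqref{G(0,y)} gives $z(0)=w(0)$, i.e.\ \eqref{tar w boundary}. Setting $z(1,t)=0$ reproduces exactly \eqref{controller} after the change of variables $v(s,t)=U(t-D_1(1-s))$, $p(s,t)=u(1,t-D_2(1-s))$.
\end{itemize}

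\textbf{Main obstacle.} The main difficulty lies outside this proof, namely in the kernel well-posedness (the domain-partitioning argument of Section~\ref{well-posedness analysis}), and we assume it here. Within the proof itself, the delicate point is the bookkeeping of the $J$-term in Step 3. The Volterra part of \eqref{v trans}, with kernel $D_1J(s-y,0)$, must absorb the contribution from $u(0,t)=v(0,t)$ that appears when differentiating $\int_0^1J(s,y)u(y,t)\,dy$; this uses the boundary term $J(s,0)$ and the compatibility at corners ensured by $c(0)=c(1)=0$.
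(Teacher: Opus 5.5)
Your proposal is correct and follows the paper's proof: finite-time extinction of the cascade $z\to w\to p$ (Lemma~\ref{target stability}), combined with $L^2$ norm equivalence through the lower-triangular transformation and its Volterra-type inverse (Lemma~\ref{norm equivalence}); your weighted Lyapunov functional and your explicit check of the target equations are correct details that the paper leaves implicit. One caveat, which the paper shares: you assume $J,G$ are bounded, but Proposition~\ref{well-posedness J G L} only gives $J,G\in L^2(\mathcal{D}_2)$ when $D_1>D_2$, and that alone does not control the trace $J(\cdot,0)$ that enters $\mathscr{L}_2$ and its inverse.
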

    The proof is presented in Section \ref{stability analysis}.

\section{Well-posedness of the kernel equations}\label{well-posedness analysis}
    The well-posedness of \eqref{K}-\eqref{F(1,y)} has been established in \cite[Prop. 1]{qi2021output}, so we only consider the well-posedness of \eqref{J}-\eqref{G(0,y)}. We first consider the equation of $J$ defined in \eqref{J}-\eqref{J(0,y)} and rewrite it as the following general case:
    \begin{align}
                M_s(s,y) =& \lambda M_y(s,y)+\int_0^yM(s,\theta)N(\theta,y)d\theta, \label{M}\\
                M(s,1)=&h(s), \label{M(s,1)}\\
                M(0,y)=&g(y), \label{M(0,y)}
            \end{align}
    where $(s,y)\in \mathcal{D}_2$, the constant $\lambda>0$, the functions $g,~h\in C([0,1])$ and $N\in C(\mathcal{D}_1)$.
    \begin{lem}\label{well-posedness M}
      Assume that the boundary conditions \eqref{M(s,1)} and \eqref{M(0,y)} are compatible, i.e., $g$ and $h$ satisfy $g(1)=h(0)$. Then the kernel equation \eqref{M}-\eqref{M(0,y)} admits a unique continuous solution $M\in C(\mathcal{D}_2)$. 
    \end{lem}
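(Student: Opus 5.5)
We will prove Lemma \ref{well-posedness M} by the method of characteristics, turning \eqref{M}--\eqref{M(0,y)} into an integral equation and solving it by successive approximations on a partition of $\mathcal{D}_2$ into strips.

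\textbf{Characteristics.} The operator $\partial_s-\lambda\partial_y$ is constant along the lines $y+\lambda s=\text{const}$. Fix $(s,y)\in\mathcal{D}_2$ and follow this characteristic backward in $s$, i.e. decreasing $s$ while increasing $y$. Two cases occur.
\begin{itemize}
\item If $y+\lambda s\le 1$, the line reaches the edge $s=0$ at the point $y+\lambda s$, and the data there is $g$.
\item If $y+\lambda s>1$, the line reaches the edge $y=1$ at $s-(1-y)/\lambda$, and the data there is $h$.
\end{itemize}
Integrating \eqref{M} along the characteristic gives
\begin{align*}
M(s,y)=\ &\Phi(s,y)+\int_{s_0(s,y)}^{s}\int_0^{y+\lambda(s-\tau)}M(\tau,\theta)\\
&\qquad\times N\big(\theta,y+\lambda(s-\tau)\big)\,d\theta\, d\tau,
\end{align*}
with
\begin{itemize}
\item $s_0=0$ and $\Phi=g(y+\lambda s)$ in the first case;
\item $s_0=s-(1-y)/\lambda$ and $\Phi=h(s_0)$ in the second case.
\end{itemize}

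\textbf{Continuity of the data term.} The compatibility condition $g(1)=h(0)$ makes $\Phi$ continuous across the separating characteristic $y+\lambda s=1$. Hence $\Phi\in C(\mathcal{D}_2)$. A continuous solution of this integral equation is exactly the (generalized) continuous solution of the kernel equations.

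\textbf{Solving the integral equation.} Write the equation as $M=\Phi+\mathcal{T}[M]$. Note that the integral at $(s,y)$ only involves values $M(\tau,\cdot)$ with $\tau\in[s_0,s]\subset[0,s]$. So $\mathcal{T}$ is Volterra in the variable $s$. Two routes are available.

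\emph{Direct route.} Set $M^0=\Phi$ and $M^{n+1}=\mathcal{T}[M^n]$. Since the $\tau$-length is at most $s$ and the $\theta$-length at most $1$, induction gives
\[
|M^{n}(s,y)|\le \|\Phi\|_\infty\,\frac{(\|N\|_\infty s)^n}{n!}.
\]
The series $\sum_n M^n$ then converges uniformly, and its limit solves the equation.

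\emph{Partition route.} To stay close to the paper's strategy, and to keep the estimates clean when characteristics cross the corner, cut $\mathcal{D}_2$ into strips $s\in[k\delta,(k+1)\delta]$ with $\delta\|N\|_\infty<1$. On each strip $\mathcal{T}$ is a contraction in the sup norm, with the values from the previous strip entering as known data. Proceeding strip by strip gives existence of $M\in C(\mathcal{D}_2)$.

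\textbf{Uniqueness.} The difference of two solutions satisfies the homogeneous equation, so its sup over $[0,s]$ obeys a Gronwall inequality. Therefore the difference vanishes.

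\textbf{Main obstacle.} The delicate step is continuity of each iterate across the line $y+\lambda s=1$, where the lower limit $s_0$ and the data term switch formulas. Continuity of $\Phi$ there follows from $g(1)=h(0)$. For the integral term, on that line $s_0=0$ in both formulas, so it is continuous as well. Since each map preserves continuity, the uniform limit is continuous.
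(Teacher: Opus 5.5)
Your proof is correct, and it takes a genuinely different and shorter route than the paper. The paper splits into $\lambda\le 1$ and $\lambda>1$. For $\lambda>1$ it asserts that the integral equation on $s>1/\lambda$ cannot be resolved by a convergent series. It therefore cuts $\mathcal{D}_2$ along the characteristics $y+\lambda s=k$ into triangles $\mathcal{T}_{k1},\mathcal{T}_{k2}$ and solves strip by strip, using the trace $M(\frac{k-1}{\lambda},\cdot)$ from the previous strip as new boundary data. Its iterates are bounded by $\bar N^n(s-\frac{k-1}{\lambda})^n/n!$ and $\bar N^n(y+\lambda s-k)^n/n!$. You instead keep a single integral equation on all of $\mathcal{D}_2$, with the continuous lower limit $s_0=\max\{0,\,s-(1-y)/\lambda\}$, and use the fact that the operator is Volterra in $s$. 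Weighting the iterates by $(\|N\|_\infty s)^n/n!$, rather than by the distance to a separating characteristic, gives uniform convergence for every $\lambda>0$ and the global bound $|M|\le\|\Phi\|_\infty e^{\|N\|_\infty s}$. So neither the case split nor the characteristic partition is needed for this lemma, and your strip/contraction variant is optional rather than required. You also give an explicit Gronwall argument for the uniqueness claimed in the statement, which the paper's constructive proof leaves implicit. What the paper's partition buys is an explicit record of which datum ($g$, $h$, or the trace from the preceding strip) determines $M$ on each characteristic triangle. That bookkeeping becomes relevant when the boundary data are incompatible and the kernel is only piecewise continuous, with jumps carried along characteristic lines, as discussed in the remarks following the lemma.
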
   
    
        \begin{pf}
             The proof is divided into two cases: (i) $\lambda\le 1$  and (ii) $\lambda>1$. 
             \begin{figure*}[htpb]
                \begin{center}
                    \begin{subfigure}[b]{0.47\linewidth}
                        \centering
                        \includegraphics[width=\linewidth,height=200pt,keepaspectratio=true]{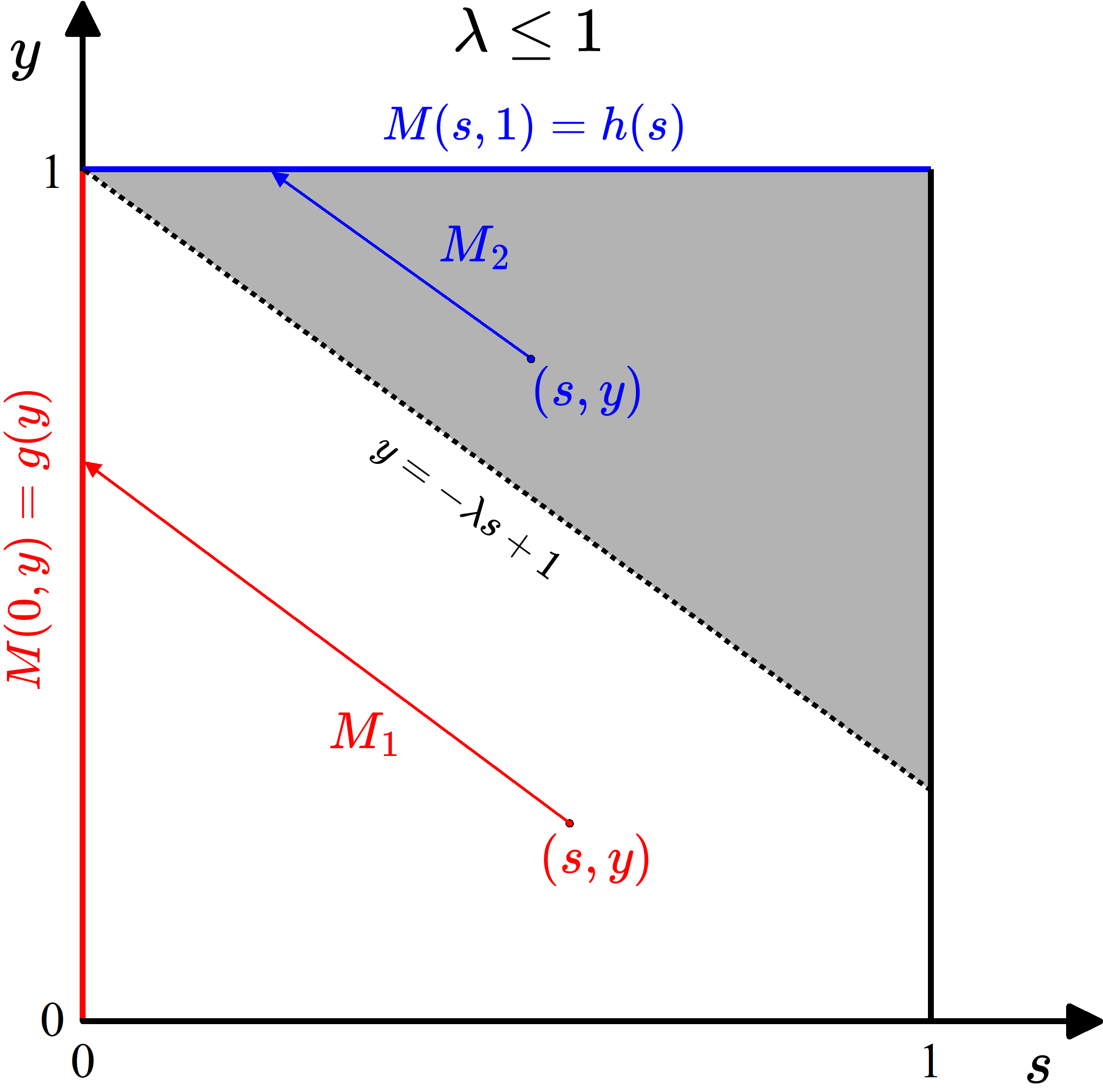}
                        \caption{}
                        \label{M smaller lambda}
                    \end{subfigure}
                    \hfill 
                    \begin{subfigure}[b]{0.47\linewidth}
                        \centering
                        \includegraphics[width=\linewidth,height=200pt,keepaspectratio=true]{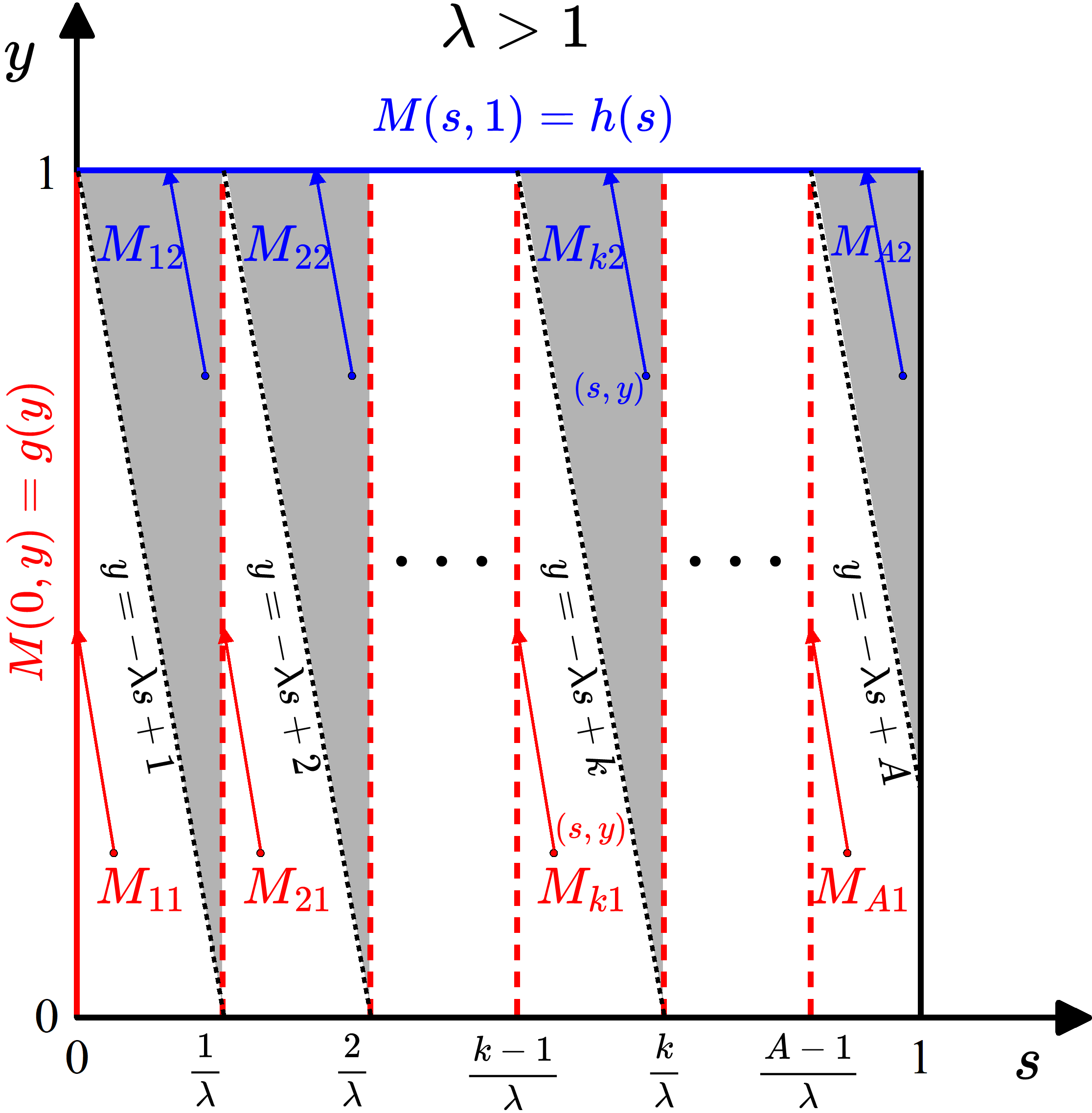}
                        \caption{}
                        \label{M larger lambda}
                    \end{subfigure}
                \end{center}
                \vspace{-8pt}
                \caption{Schematic of the domain of $M$, showing subregion division and characteristic lines.
                }
                \label{fig:kernel M lambda>1}
             \end{figure*}

\textbf{Case (i)} $\lambda\le 1$

Using the method of characteristics, we derive a two-branch integral form of $M$, arising from its two boundary conditions.
 \begin{align}
     M(s,y)=\left\{\begin{aligned}
         &M_1(s,y),& &\mathrm{if}~y\le -\lambda s+1, \\
         &M_2(s,y),& &\mathrm{if}~y> -\lambda s+1,
     \end{aligned}
     \right.
 \end{align}
 where 
 \begin{align}
     M_1(s,y) \!=& g(y+\lambda s)\!+\!\Phi_1[M_1](s,y), \label{M1 integral}\\
     M_2(s,y) \!=& h(\mu(s,y))\!+\!\Phi_2[M_2](s,y)+\!\Phi_3[M_1](s,y)\!, \label{M2 integral}
 \end{align}
 with $\mu(s,y)=s-\frac{1-y}{\lambda}$, and 
 \begin{align}
     \Phi_1[M_1]=&\int_{0}^{s}\!\!\!\int_{0}^{y+\lambda(s-\xi)}\!\!\!\!M_{1}(\xi,\theta)N(\theta,y+\lambda(s-\xi))d\theta d\xi, \\                  
     \Phi_2[M_2] =& \int_{0}^{\frac{1-y}{\lambda}}\!\!\!\int_{1-y-\lambda s}^{0}M_2(\xi+\mu(s,y),\theta-\lambda \xi+1) \notag \\
     &\qquad \quad \times N(\theta-\lambda \xi+1,-\lambda\xi+1)d\theta d\xi,\\
     \Phi_3[M_1] =& \int_{0}^{\frac{1-y}{\lambda}}\!\!\!\int_{0}^{2-y-\lambda(s+\xi)}M_1(\xi+\mu(s,y),\theta) \notag \\
     &\qquad \quad \times N(\theta,-\lambda\xi+1)d\theta d\xi.
 \end{align}
 
 We apply the method of successive approximations to obtain
 \begin{align}
     M_1^{n+1} =& g(y+\lambda s)+\Phi_1[M_1^{n}], \\
     M_2^{n+1} =& h(\mu(s,y))+\Phi_3[M_1]+\Phi_2[M_2^{n}],
 \end{align}
 with initial values $M_i^0(s,y)=0$, $i=1,\; 2$.
 Denote $\Delta M_i^{n}=M_i^{n+1}-M_i^{n}$, for $n=0,\; \ldots,\; \infty$ , then $\Delta M_1^0(s,y)=g(y+\lambda s)$, $\Delta M_2^0(s,y)=h(\mu(s,y))+\Phi_3[M_1](s,y)$, and 
 \begin{align}
     \Delta M_i^{n+1}(s,y) =& \Phi_i[\Delta M_i^{n}](s,y),\quad i=1,\;2. \label{Delta Mi n+1 and n}
 \end{align}
 First, we consider $M_1$. Assume that 
 \begin{align}
     &\left|\Delta M_1^{n}(s,y)\right|\le \bar g\frac{\bar N^ns^n}{n!}, \quad n\ge 0, \label{Delta M1 assume} 
 \end{align}
 where $\bar g=\|g\|_\infty$, $\bar N=\sup_{\mathcal{D}_1} |N(\theta,y)|$. Substituting \eqref{Delta M1 assume} into \eqref{Delta Mi n+1 and n} yields 
 \begin{align}
     \left|\Delta M_1^{n+1}(s,y)\right|\le& \int_{0}^{s}\!\!\!\int_{0}^{y+\lambda(s-\xi)}\bar g\frac{\bar N^n\xi^n}{n!}\bar Nd\theta d\xi \notag \\
     \le& \bar g\frac{\bar N^{n+1}}{n!}\int_{0}^{s}\xi^nd\xi 
     =\bar g\frac{\bar N^{n+1}s^{n+1}}{(n+1)!},
 \end{align}
 where we use $y+\lambda(s-\xi)\le  1$.
  Following the same procedure, we can obtain the following from $M_2$,
 \begin{equation}
     \left|\Delta M_2^{n}(s,y)\right|\le (\bar h\!+\!\bar \Phi_3)\frac{\bar N^n(y+\lambda s-1)^n}{n!}, \; n\ge 0, \label{Delta M2 assume}
 \end{equation}
 where $\bar h=\|h\|_{\infty}$, $\bar \Phi_3=\sup_{y\le -\lambda s+1} |\Phi_3[M_1](s,y)|$. 
 
 Therefore, the series 
 \begin{align}
     M_i(s,y)=&\sum_{n=0}^{\infty}\Delta M_i^n(s,y), \quad i=1,\;2, \label{Misum}
  \end{align}
 are uniformly convergent, which give 
  \begin{align}
     \left|M_1(s,y)\right|  \!\le \! \bar g\mathrm{e}^{\bar Ns},\: \left|M_2(s,y)\right|  \!\le\!\!  (\bar h \!+\! \bar \Phi_3)\mathrm{e}^{\bar N(y+\lambda s-1)}.
 \end{align}

\textbf{Case (ii)} $\lambda > 1$, corresponding to the input delay larger than one.

Since the characteristic line $y=-\lambda s+1$ intersects the boundary $y=0$, we can not prove the integral form of $M$ in the right region like the first case. In particular, in the area $s>\frac{1}{\lambda}$, shown in Fig.~\ref{M larger lambda}, the integral form derived from \eqref{M}-\eqref{M(0,y)} 
 \begin{align}
     M(s,y) \!=&h(\mu(s,y))\!+\!\int_{0}^{\frac{1-y}{\lambda}}\!\!\int_{0}^{-\lambda \xi+1}M(\xi+\mu(s,y),\theta) \notag\\
     &\times N(\theta,-\lambda \xi+1)d\theta d\xi\label{M integral equation not solved}
 \end{align}
 can not be formulated by a convergent series. 
 
 To address this issue, we partition the domain into several subregions and establish the existence of solutions successively within each. Denote an integer $A\ge2$ such that $A-1\le \lambda<A$ and define the following $2A$ triangular subregions in $\mathcal{D}_2$,
 \begin{align}
     &\mathcal{T}_k \!= \!\!\left\{\!(s,y)\!\in\!\left[\frac{k\!-\!1}{\lambda},\frac{k}{\lambda}\right]\!\times\![0,1]\right\}, \quad  k=1,\ldots, A, \label{Tk}\\
     &\mathcal{T}_{k1} \!\!= \!\!\left\{(s,y)\!\in\!\! \mathcal{T}_k : \frac{k-1}{\lambda} \le s\le\frac{k-y}{\lambda}\right\},\\
     &\mathcal{T}_{k2} \!\!= \!\!\left\{(s,y)\!\in\!\! \mathcal{T}_k : \frac{k-y}{\lambda} < s\le \min\left\{\frac{k}{\lambda},1\right\}\right\}.
 \end{align}

Denote $\varphi_k(s) =-\lambda s+k$ for $k= 1, \ldots, A$, then we define
 
 \begin{align}
     \Omega_{k1}[H_1] =&\int_{0}^{s-\frac{k-1}{\lambda}}\!\!\!\int_{0}^{y - \lambda\xi-\varphi_{k-1}(s)}\!\!H_1\Bigl(\xi\!+\!\frac{k\!-\!1}{\lambda},\theta\Bigr) \notag\\
     &\quad\:\times N(\theta,y\!-\!\lambda\xi-\varphi_{k-1}(s))d\theta d\xi, \label{Omega k1}\\                
     \Omega_{k2}[H_2]  =& \int_{0}^{\frac{1-y}{\lambda}}\!\!\!\int_{\varphi_k(s)-y}^{0}\!\!H_2(\xi\!+\!\mu(s,y),\theta \!-\! \lambda \xi+1) \notag \\
     &\quad \times N(\theta\!-\!\lambda \xi+1,-\lambda\xi+1)d\theta d\xi, \label{Omega k2}\\
     \Omega_{k3}[H_1]  =& \int_{0}^{\frac{1-y}{\lambda}}\!\!\!\int_{0}^{\varphi_{k+1}(s)-y-\lambda\xi}H_1(\xi+\mu(s,y),\theta) \notag \\
     &\quad \times N(\theta,-\lambda\xi+1)d\theta d\xi, \label{Omega k3}
 \end{align}
 where functions $H_1(s,y)$ and $H_2(s,y)$ are defined on $\mathcal{T}_{k1}$ and  $\mathcal{T}_{k2}$, respectively. 
 
 We denote function $M$ on the region $\mathcal{T}_{ki}$ by $M_{ki}(s,y)$.   
 From the boundary conditions \eqref{M(s,1)} and \eqref{M(0,y)}, we obtain the restriction of $M$  to $\mathcal{T}_{1}$,
\begin{align}\label{T1 solution}
     M(s,y)=\left\{\begin{aligned}
         &M_{11}(s,y),& &(s,y)\in\mathcal{T}_{11}, \\
         &M_{12}(s,y),& &(s,y)\in\mathcal{T}_{12},
     \end{aligned}
     \right.
\end{align}
 where 
 \begin{align*}
     M_{11}(s,y) =& g(y+\lambda s)\!+\!\Omega_{11}[M_{11}](s,y). 
     \\
     M_{12}(s,y) =& h(\mu(s,y))\!+\!\Omega_{13}[M_{11}](s,y)\!+\!\Omega_{12}[M_{12}](s,y), 
 \end{align*}
 The above equations become a replica of  \eqref{M1 integral}-\eqref{M2 integral}, which gives $M_{1i}\in C(\mathcal{T}_{1i})$, $i=1,\;2$.

 Assume $M_{k-1,i} \in C(\mathcal{T}_{k-1,i})$, $k=2,\ldots, A$,~$i=1,\; 2$, we will prove $M_{ki} \in C(\mathcal{T}_{ki})$, $i=1,\; 2$.
 Consider the next subregion $\mathcal{T}_k$, where $M_{k-1,2}(\frac{k-1}{\lambda}, y)$ that is already shown to be continuous and bounded serves as the boundary value for determining $M_{k1}$, as illustrated in Fig.~\ref{M larger lambda}. Using this boundary condition along with equation \eqref{M}, we can establish that $M_{k1} \in C(\mathcal{T}_{k1})$. Additionally, the function $M_{k2}$ depends solely on the boundary condition specified in \eqref{M(s,1)}, and it can likewise be shown to belong to $C(\mathcal{T}_{k2})$. 
 
 Specifically, we get the restriction of $M$ to $\mathcal{T}_k$ as 
 \begin{align}\label{Tk solution}
     M(s,y)=\left\{\begin{aligned}
         &M_{k1}(s,y),& &(s,y)\in\mathcal{T}_{k1}, \\
         &M_{k2}(s,y),& &(s,y)\in\mathcal{T}_{k2},
     \end{aligned}
     \right.
 \end{align}
 where 
 \begin{align}
     M_{k1} =& M_{k-1,2}\left(\frac{k-1}{\lambda}, y\!-\!\varphi_{k-1}(s)\right)\!+\!\Omega_{k1}[M_{k1}], \label{Mk1 integral}\\
     M_{k2} =& h(\mu)\!+\!\Omega_{k3}[M_{k1}]\!+\!\Omega_{k2}[M_{k2}], \label{Mk2 integral}
 \end{align}
 with $\Omega_{kj}$ defined in \eqref{Omega k1}-\eqref{Omega k3}.   

 We will show that each $M_{ki}$, $i=1,2$, can be expressed as a convergent series by using the successive approximation method. Define the following recursive formula with initial values $M_{ki}^0=0$, $i=1,\; 2$, 
 \begin{align}
     M_{k1}^{n+1} =& M_{k-1,2}\bigl(\frac{k-1}{\lambda}, y\!-\!\varphi_{k-1}(s)\bigr)\!+\!\Omega_{k1}[M_{k1}^{n}],\\
     M_{k2}^{n+1} =& h(\mu(s,y))\!+\!\Omega_{k3}[M_{k1}]\!+\!\Omega_{k2}[M_{k2}^{n}]. 
 \end{align}             
 Denote $\Delta M_{ki}^{n}(s,y)= M_{ki}^{n+1}(s,y)-M_{ki}^{n}(s,y)$, for $n=0,\; 1,\; \ldots,\; \infty$, which gives
 \begin{align}
     \Delta M_{ki}^{n+1}(s,y) =& \Omega_{ki}[\Delta M_{ki}^{n}](s,y),\quad i=1,\; 2. \label{Delta Mki n+1 and n}
 \end{align}
 By induction, we obtain, 
\begin{align}
     &\left|\Delta M_{k1}^{n}(s,y)\right|\le\bar M_{k-1,2}\frac{\bar N^n(s-\frac{k-1}{\lambda})^n}{n!}, \label{Delta Mk1 assume}
 \end{align}
 where $\bar M_{k-1,2}=\|M_{k-1,2}(\frac{k-1}{\lambda},y)\|_\infty$ and 
 \begin{align}
     &\left|\Delta M_{k2}^{n}(s,y)\right|\le(\bar h\!+\!\bar \Omega_{k3})\frac{\bar N^n(y-\varphi_k(s))^n}{n!},\label{Delta Mk2 assume}
 \end{align}
 where $\bar \Omega_{k3}=\sup_{\mathcal{T}_{k1}} |\Omega_{k3}[M_{k1}](s,y)|$. Consequently, the series
 \begin{align}
     M_{ki}(s,y) = \sum\limits_{n=0}^{\infty}\Delta M_{ki}^n(s,y), \quad i=1,\; 2\label{Mk1 Mk2 sum}
 \end{align}
 uniformly converge to the solution governed by \eqref{Mk1 integral}-\eqref{Mk2 integral}, with a bound
 \begin{align}
     &\left|M_{k1}(s,y)\right|  \le \bar M_{k-1,2}\mathrm{e}^{\bar N(s-\frac{k-1}{\lambda})}, \\
     &\left|M_{k2}(s,y)\right| \le (\bar h + \bar \Omega_{k3})\mathrm{e}^{\bar N(y-\varphi_k(s))}.
 \end{align}
Regarding the continuity, we follow the same procedure in \cite{coron2013local}. Since \eqref{Mk1 Mk2 sum} converges uniformly, we only need to prove continuity of each term. The initializations $\Delta M_{ki}^0$, $i=1,~2$, are continuous, then it can be shown immediately that $\Delta M_{ki}^n$ are continuous from \eqref{Delta Mki n+1 and n}.

Finally, combining \eqref{T1 solution} and \eqref{Tk solution}, we conclude that $M\in C(\mathcal{D}_2)$. This completes the proof of Lemma~\ref{well-posedness M}.             \hfill $\blacksquare$
        \end{pf}

 \begin{rem}\label{boundary compatibility}
        The equations \eqref{M}-\eqref{M(0,y)} remain well-posed without compatible boundary conditions, allowing for a piecewise continuous solution with a finite number of discontinuities along the characteristic line $y=-\lambda s+1$. Such incompatible boundary conditions have been discussed in \cite[Theorem A. 1]{hu2019boundary} and \cite[Remark 5]{hu2016control}. 
    \end{rem}
    \begin{rem}
   It should be noted that the domain-partitioning strategy used in Lemma~\ref{well-posedness M} differs from that in \cite{redaud2022stabilizing}. Here, the kernel domain is divided along characteristic lines into a finite number of triangular subdomains, and the solution is constructed successively on these subdomains by the method of successive approximations. In contrast, \cite{redaud2022stabilizing} partitions a rectangular kernel domain into rectangular or parallelogram-type subregions. Based on this domain-partitioning strategy, the proposed method can directly deal with kernel equations containing integral terms, regardless of whether these terms appear in the interior of the domain, on the boundary, or both, and enables the construction of explicit solutions for the kernel equations as a byproduct. Thus, unlike the approach in \cite{redaud2022stabilizing}, it does not require a preliminary transformation that transfers in-domain integral couplings to boundary integral couplings. 
    \end{rem}

    Now we present the well-posedness result of the kernel equations \eqref{J}-\eqref{G(0,y)}. 
    
    \begin{prop}\label{well-posedness J G L}
        Consider the kernel equations \eqref{J}-\eqref{G(0,y)}. 
        Under Assumption~\ref{assumption}, there exists a unique solution  in the $L^2$ sense, i.e., 
        $J,\, G \in L^2(\mathcal{D}_2)$. 
        Moreover, it can be further shown that $J,\, G$ is continuous in the case where $D_1 \le D_2$, i.e.,
        $J,\, G \in C(\mathcal{D}_2)$.
    \end{prop}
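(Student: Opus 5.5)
The plan is to decouple the system \eqref{J}--\eqref{G(0,y)} by solving the pure transport equation \eqref{G} for $G$ explicitly in terms of $F$ and $J$. Substituting the result into \eqref{J(s,1)} turns the $J$-equation into an instance of \eqref{M}--\eqref{M(0,y)}, to which Lemma~\ref{well-posedness M} applies. Throughout, $K$ and $F$ are taken as known continuous functions from \cite[Prop. 1]{qi2021output}.

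\textbf{Step 1: solve for $G$.} Since $G_s=-\frac{D_1}{D_2}G_y$, $G$ is constant along the lines $y-\frac{D_1}{D_2}s=\mathrm{const}$. Tracing a characteristic backward from $(s,y)$, it reaches either $s=0$ or $y=0$. Using \eqref{G(0,y)} and \eqref{G(s,0)}, this gives
\begin{align*}
G(s,y)=\left\{\begin{aligned}
&F\Bigl(0,y-\tfrac{D_1}{D_2}s\Bigr), & &y\ge \tfrac{D_1}{D_2}s,\\
&D_2\int_0^1 J\Bigl(s-\tfrac{D_2}{D_1}y,\theta\Bigr)c(\theta)d\theta, & &y< \tfrac{D_1}{D_2}s.
\end{aligned}\right.
\end{align*}
At $y=1$ we have $1\ge\frac{D_1}{D_2}s$ exactly when $s\le \frac{D_2}{D_1}$. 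Hence \eqref{J(s,1)} becomes
\begin{align*}
J(s,1)=h(s):=\left\{\begin{aligned}
&\tfrac{1}{D_2}F\Bigl(0,1-\tfrac{D_1}{D_2}s\Bigr), & &s\le \tfrac{D_2}{D_1},\\
&\int_0^1 J\Bigl(s-\tfrac{D_2}{D_1},\theta\Bigr)c(\theta)d\theta, & &s>\tfrac{D_2}{D_1}.
\end{aligned}\right.
\end{align*}
So the $J$-problem is \eqref{M}--\eqref{M(0,y)} with $\lambda=D_1$, $N=D_1f$ and $g=K(0,\cdot)$. The boundary datum $h$ is either explicit or a delayed functional of $J$ itself.

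\textbf{Step 2: the case $D_1\le D_2$.} Here $\frac{D_2}{D_1}\ge1$, so only the first branch of $h$ occurs on $[0,1]$, and $h\in C([0,1])$ is explicit. The compatibility condition of Lemma~\ref{well-posedness M} requires $g(1)=h(0)$, that is, $K(0,1)=F(0,1)/D_2$, which is exactly \eqref{K(x,1)} at $x=0$. The lemma then yields a unique $J\in C(\mathcal{D}_2)$.

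It remains to check that $G$ is continuous. Each branch of $G$ is continuous, so only the interface $y=\frac{D_1}{D_2}s$ needs attention. There we need $F(0,0)=D_2\int_0^1 J(0,\theta)c(\theta)d\theta=D_2\int_0^1K(0,\theta)c(\theta)d\theta$. This follows from \eqref{F(x,0)} at $x=0$ together with $c(0)=0$ from Assumption~\ref{assumption}. Hence $G\in C(\mathcal{D}_2)$.

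\textbf{Step 3: the case $D_1>D_2$.} Now $h$ on $(\frac{D_2}{D_1},1]$ depends on $J$ at the earlier "time" $s-\frac{D_2}{D_1}$. I would use a method of steps in $s$ on the strips $s\in[m\frac{D_2}{D_1},(m+1)\frac{D_2}{D_1}]$, $m=0,1,\dots$; there are finitely many since $s\le1$. On each strip, $h$ is determined by $J$ from the previous strip. Restarting from the values of $J$ on the strip's left edge as initial data (as in the $\mathcal{T}_{k1}$ step of Lemma~\ref{well-posedness M}), the proof of that lemma reapplies on each strip. Within a strip, the domain is partitioned further along the characteristics $y=-D_1s+\mathrm{const}$ into triangles, exactly as in case (ii) of the lemma.

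The main obstacle is regularity at the step interfaces and along the characteristics emanating from the corners $(m\frac{D_2}{D_1},1)$. Compatibility of the boundary data can fail there, for instance at the jump from the explicit branch of $h$ to the delayed branch, and derivative-level mismatches propagate. Because of this I would only aim for $L^2$ (piecewise continuous) solutions in this case, in the spirit of Remark~\ref{boundary compatibility}.

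An alternative that avoids the bookkeeping is to set up the coupled integral equations for $(J,G)$ directly as a fixed point in $L^2(\mathcal{D}_2)$. One would bound the successive iterates by an infinite-induction energy series as in \cite{chen2024backstepping}, obtaining an estimate of the form $\|\Delta J^n\|\le C\frac{\bar c^n}{n!}$. The key point is that the Fredholm coupling through $c$ only enters with a positive delay $\frac{D_2}{D_1}$, which produces the factorial gain.

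Uniqueness in both cases follows by applying the same estimates to the difference of two solutions, which satisfies the homogeneous problem and therefore vanishes.
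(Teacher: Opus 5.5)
Your treatment of $D_1\le D_2$ is the paper's argument. $G$ is explicit on $\mathcal{A}_1$ from \eqref{G(0,y)}. Lemma~\ref{well-posedness M} is applied with $\lambda=D_1$, $N=D_1f$, $g=K(0,\cdot)$, $h(s)=\frac{1}{D_2}F(0,1-\frac{D_1}{D_2}s)$. $G$ on $\mathcal{A}_2$ then follows from \eqref{G(s,0)}, with the same two corner checks.

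For $D_1>D_2$ you take a different route. The paper does not march in $s$. It writes $J=\sum_n j^n$, $G=\sum_n g^n$ as in \eqref{J G sum}, where $(j^0,g^0)$ carries the data $K(0,\cdot)$, $F(0,\cdot)$ with $g^0(s,0)=0$, and each $(j^n,g^n)$ is driven only through $g^n(s,0)=D_2\int_0^1 j^{n-1}(s,y)c(y)dy$. It then bounds the cumulative energies $V_n(s)=\sum_{k\le n}\int_0^1\bigl((j^k)^2+\gamma(g^k)^2\bigr)dy$ via $V_n'\le\alpha V_n+\beta$, the infinite-induction energy series of \cite{chen2024backstepping}. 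The energy method's appeal is that it is a generic device for nonlocal boundary couplings and does not require tracking the delay structure.

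Your method of steps of length $D_2/D_1$, which reduces everything to finitely many applications of Lemma~\ref{well-posedness M}, is more elementary. It also explains why the paper's series behaves. Since the $c$-coupling in \eqref{non-trivial} acts with delay $D_2/D_1$, one gets $j^n\equiv0$ for $s\le nD_2/D_1$, so only finitely many terms are nonzero on $\mathcal{D}_2$. By contrast, a uniform bound on $V_n$ alone controls $\sum_k\|j^k(s,\cdot)\|^2$, not the convergence of $\sum_k j^k$. The same termination also makes your factorial-gain alternative unnecessary.

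Your caution about regularity for $D_1>D_2$ is unfounded, and your own argument proves more than the proposition states. The only candidate jump of $h$ is at $s=D_2/D_1$, and there the two branches agree:
$\frac{1}{D_2}F(0,0)=\int_0^1K(0,\theta)c(\theta)d\theta=\int_0^1J(0,\theta)c(\theta)d\theta$.
This is exactly the identity from \eqref{F(x,0)} and $c(0)=0$ that you already verified in Step 2. At later step points $h$ is continuous by induction, and the restarted datum $J(s_m,\cdot)$ is compatible with $h(s_m)$ by construction. So every strip has continuous, compatible data, and your steps give $J,G\in C(\mathcal{D}_2)$ also when $D_1>D_2$.

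The jump you anticipated is an artifact of the paper's splitting. There $g^0(\cdot,1)$ drops from $F(0,0)$ to $0$ at $s=D_2/D_1$. Hence the individual $j^n$ are in general only piecewise continuous, even though their sum is continuous. This is likely why the paper claims only $L^2$ in that case.
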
    
    \begin{pf}
          The proof is divided into two cases: (i) $D_1 \le D_2$  and (ii) $D_1 > D_2$.  Before proceed, we define subregion $\mathcal{A}_1=\{(s,y)\in\mathcal{D}_2: y\ge \frac{D_1}{D_2}s\}$ and $\mathcal{A}_2=\{(s,y)\in\mathcal{D}_2: y < \frac{D_1}{D_2}s\}$.

        \textbf{Case (i)} $D_1 \le D_2$, corresponding to the case where input delay is less than state delay.   
        
        We first solve $G(s,y)$ in the subregion $\mathcal{A}_1$. Using the method of characteristics, combining the boundary condition \eqref{G(0,y)}, one gets
        \begin{align}
            G(s,y) =& F\left(0,y-\frac{D_1}{D_2}s\right),\quad (s,y)\in\mathcal{A}_1, \label{G A1 part}
        \end{align}
      which gives $G(s,1) = F(0,1-\frac{D_1}{D_2}s)$ and it is continuous.

 Substituting the resulting $G(s,1)$ into \eqref{J(s,1)}, we can decouple $J$ from $G$. Then applying Lemma~\ref{well-posedness M} by setting $\lambda=D_1$, $N(\theta,y)=D_1f(\theta,y)$, $h(s)=\frac{1}{D_2} F(0,1-\frac{D_1}{D_2}s)$, $g(y)=K(0,y)$, it gives  $J\in\mathcal{C}(\mathcal{D}_2)$. Notice that the boundary \eqref{J(s,1)} and \eqref{J(0,y)} are compatible at $(0,1)$, since 
        \begin{align}
            J(0,1)=\frac{1}{D_2}F(0,1)=K(0,1). 
        \end{align}

        Returning to $G(s,y)$ in the subregion $\mathcal{A}_2$, which is governed by the other boundary condition \eqref{G(s,0)}, we obtain
        \begin{align}
            G(s,y)\!=& D_2\int_0^1J\!\left(\!s\!-\!\frac{D_2}{D_1}y,\theta\!\right)\!c(\theta)d\theta, \, (s,y)\in\mathcal{A}_2, \label{G A2 part}
        \end{align}
        which gives $G\in C(\mathcal{A}_2)$. It is easy to validate the compatibility of the two boundary conditions of $G$ under Assumption~\ref{assumption}, that is 
        \begin{align}
            G(s,0)|_{s=0}=G(0,y)|_{y=0}=\int_0^1D_2K(0,y)c(y)dy.
        \end{align}

        \textbf{Case (ii)} $D_1 > D_2$.
        
        Note that in this case, the characteristic line $y=\frac{D_1}{D_2}s$ intersects the boundary $y=1$. The boundary $s=1$ extends across two regions, shown in the right panel of Fig.~\ref{fig:JG characteristic line case 2}. It implies that $G(s,1)$ is affected by both two boundary conditions  \eqref{G(s,0)} and \eqref{G(0,y)}. Consequently, $J(s,1)=G(s,1)/D_2$ consists of two branches separated by $s=\frac{D_2}{D_1}$. In the second subregion, where $s>\frac{D_2}{D_1}$, we have
        \begin{align}
            J(s,1) = \int_0^1 J\left(s-\frac{D_2}{D_1},\theta\right)c(\theta)dy,\;  \mathrm{for}~s>\frac{D_2}{D_1}, \label{non-trivial}
        \end{align}
        which is obtained by substituting equation \eqref{G A2 part} into \eqref{J(s,1)}.
        This nonlocal boundary condition brings difficulties to the well-posedness analysis, since the presence of the shifting term $J(s-\frac{D_2}{D_1},\theta)$ in the integral prevents the construction of a convergent series representation for $J$.
        \begin{figure}[htpb]
            \centering
            \includegraphics[width=1.05\linewidth]{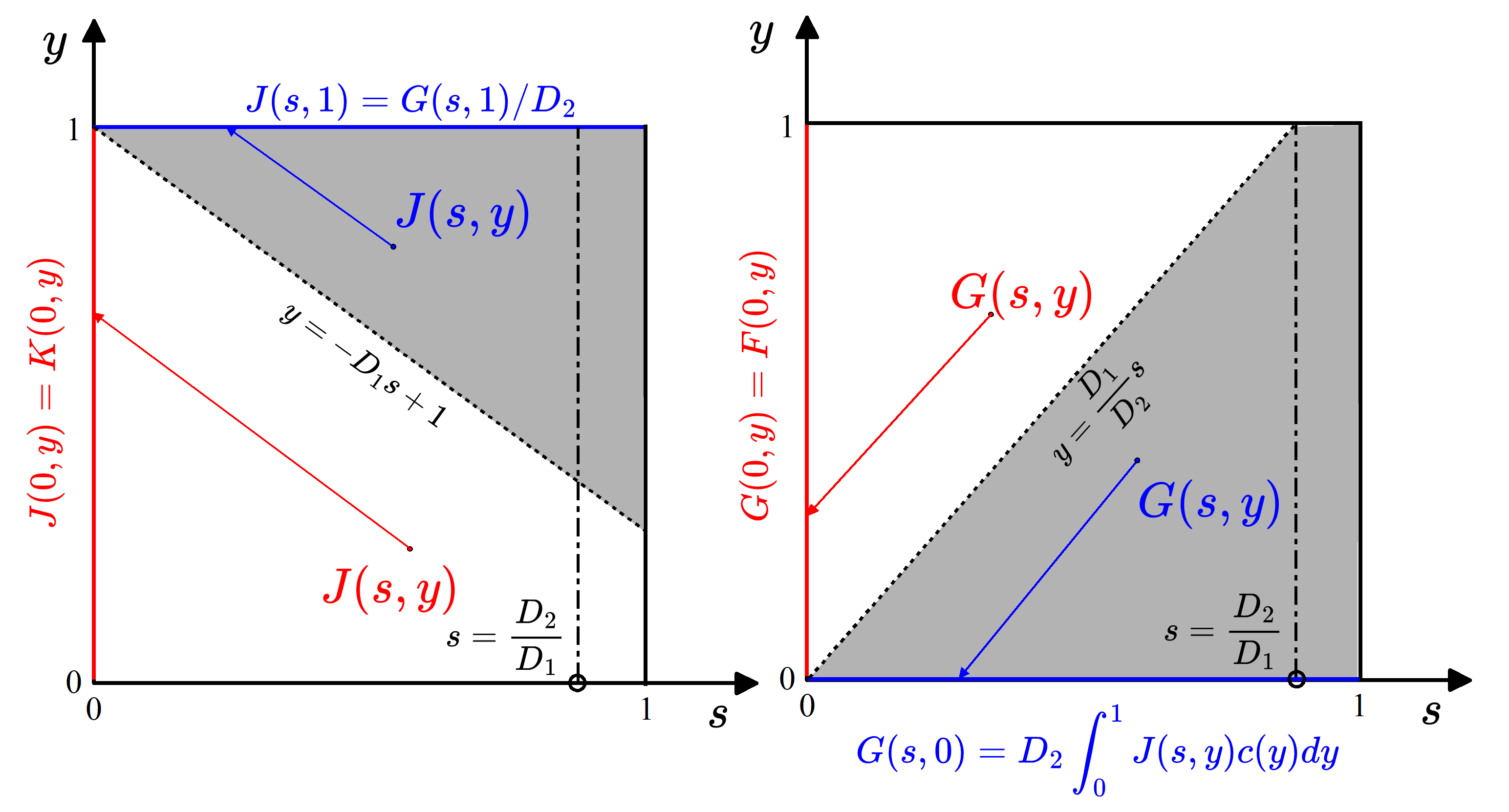}            
            \caption{The domain of $J$ (Left) and $G$ (Right) and the corresponding characteristic lines as $D_1>D_2$.  
            }
            \label{fig:JG characteristic line case 2}
        \end{figure}
        
   To address this issue, we adopt the infinite induction energy series method originally proposed in \cite{chen2024backstepping}. This method constructs an infinite hierarchy of energy functionals associated with recursively defined auxiliary functions. By proving the convergence of this energy series, it proves the well-posedness of the complicated coupled PDEs with nonlocal boundary conditions.

        Specifically, we construct infinite sequences of equations whose solutions are denoted by $\{j^{n}(s,y)\}$, $\{g^{n}(s,y)\}$ for $n=1, \ldots, \infty$,
        \begin{subequations}\label{jn gn}
        \begin{align}
            j_{s}^{n}(s,y) =& D_{1}j_{y}^{n}(s,y)+D_{1}\int_{0}^{y}j^{n}(s,\theta)f(\theta,y)d\theta, \label{jn}\\
            j^{n}(s,1) =& \frac{1}{D_{2}}g^{n}(s,1), \quad j^n(0,y)=0, \label{jn boundary initial}\\
            g_{s}^{n}(s,y) =& -\frac{D_{1}}{D_{2}}g_{y}^{n}(s,y), \label{gn}\\
            g^{n}(s,0) =&D_2\int_0^1j^{n-1}(s,y)c(y)dy, ~  g^{n}(0,y)=0.\label{gn boundary initial}
        \end{align}
        \end{subequations}
        with initial equation only coupled in a single direction,
        \begin{subequations}\label{j0 g0}
        \begin{align}
            j_{s}^{0}(s,y) =& D_{1}j_{y}^{0}(s,y)+D_{1}\int_{0}^{y}j^{0}(s,\theta)f(\theta,y)d\theta, \label{j0}\\
            j^{0}(s,1) =& \frac{1}{D_{2}}g^{0}(s,1), \quad j^0(0,y)=K(0,y), \label{j0 boundary initial}\\
            g_{s}^{0}(s,y) =& -\frac{D_{1}}{D_{2}}g_{y}^{0}(s,y), \label{g0}\\
            g^{0}(s,0) =&0, \qquad\qquad\quad  g^{0}(0,y)=F(0,y).\label{g0 boundary initial}
        \end{align}
        \end{subequations}
        It is clear that the initial equation is well-posed from Lemma~\ref{well-posedness M}. If \eqref{jn gn} and \eqref{j0 g0} have a solution, then $J(s,y)$ and $G(s,y)$ which satisfy \eqref{J}-\eqref{G(0,y)} can be represented as
        \begin{align}
            J=&\sum_{n=0}^{\infty}j^n(s,y), & G=&\sum_{n=0}^{\infty}g^n(s,y), \label{J G sum}
        \end{align}
        where each $j^n(s,y)$ and $g^n(s,y)$ is solution of the equations \eqref{j0 g0} and \eqref{jn gn}.
        Moreover, the explicit solution of $g_0$ and $g_n$ can be written as 
        \begin{align}\label{g0 solution}
                 g^0(s,y)=\left\{\begin{aligned}
                     &F\Bigl(0,y-\frac{D_1}{D_2}s\Bigr),& & ~\text{in}~\mathcal{A}_1, \\
                     &0,& &~\text{in}~\mathcal{A}_2.
                 \end{aligned}
                 \right.
        \end{align}        
        and 
        \begin{align}\label{eq:gn}
            g^n(s,y) \!=\! \left\{\!\begin{aligned}
                     &0, \qquad \qquad \qquad \qquad \qquad \qquad  ~\text{in}~\mathcal{A}_1, \\
                     &\!\!\int_0^1\!\! D_2j^{n-1}\Bigl(s\!-\!\frac{D_2}{D_1}y,\theta\Bigr)c(\theta)d\theta, ~\text{in}~\mathcal{A}_2, 
                 \end{aligned}
                 \right.
        \end{align}  
        which indicates that \eqref{jn}-\eqref{jn boundary initial} define well-posed $j^n$ from Lemma~\ref{well-posedness M}.

        Introduce the following sequence of functionals $V_n(s)$ with respect to $j^n$, $g^n$ for $n\ge 1$,
        \begin{align}
            V_{1}(s) =& \int_{0}^{1}\left(j^{1}(s,y)\right)^{2}dy+\gamma\int_{0}^{1}\left(g^{1}(s,y)\right)^{2}dy, \label{sequence V1}\\
            V_{n}(s) =& V_{n-1}(s)+\int_{0}^{1}\left(j^{n}(s,y)\right)^{2}dy \notag \\
            &+\gamma\int_{0}^{1}\left(g^{n}(s,y)\right)^{2}dy, \qquad \mathrm{for}~n\ge2, \label{sequence Vn}
        \end{align}
        where $\gamma>0$ is a constant to be determined later. 
                
        Differentiating \eqref{sequence V1} with respect to $s$ along the solution of \eqref{jn gn} yields
        \begin{align}
            V_1^{\prime}(s) \! =&\! \int_0^1\!\! 2D_{1}j^1(s,y)\left[j_{y}^{1}(s,y)\!+\!\!\int_{0}^{y}\!\! j^{1}(s,\theta)f(\theta,y)d\theta\right]dy \notag \\
            &+\gamma\int_0^1 2g^1(s,y)\left(-\frac{D_{1}}{D_{2}}g_{y}^{1}(s,y)\right)dy \notag \\
            =& -\frac{D_1}{D_2}\left(\gamma-\frac{1}{D_2}\right)\left(g^1(s,1)\right)^2-D_1\left(j^1(s,0)\right)^2 \notag \\
            &+\int_0^1\!\!\!\int_{0}^{y}2D_{1}j^1(s,y)j^{1}(s,\theta)f(\theta,y)d\theta dy \notag \\
            &+\gamma D_1D_2\left(\int_0^1j^{0}(s,y)c(y)dy\right)^2,            
        \end{align}
       where the last line follows from \eqref{gn boundary initial}. Choosing $\gamma\ge\frac{1}{D_2}$ and using the Cauchy-Schwarz inequality, one gets
        \begin{align}
            V_1^{\prime}(s) \le& \alpha_1 \!\int_0^1\!\!\! \left(j^{1}(s,y)\right)^{2}dy \!+\! \alpha_0 \!\int_0^1\!\!\! \left(j^{0}(s,y)\right)^{2}dy \notag \\
            \le& \alpha_1 \!\int_0^1\!\!\! \left(j^{1}(s,y)\right)^{2}dy \!+\! \beta 
            \le \alpha_1V_1(s)+\beta, \label{V1 estimate}
        \end{align}
        where $\alpha_1=2D_1\bar f$, $\alpha_0=\gamma D_1D_2\bar c^2$, $\beta=\alpha_0\bar{j^0}^2$, $\bar c=\|c\|_\infty$, $\bar f=\sup_{\mathcal{D}_1}|f(x,y)|$, $\bar{j^0}=\sup_{\mathcal{D}_2}\lvert j^0(s,y)\rvert$. 
        
        In a similar method, we derive that for $n=2,\ldots$
        \begin{align}\label{eq:V_n}
          V_{n}^{\prime}(s) \le& V_{n-1}^{\prime}(s) \! +\!  \int_0^1 \alpha_1(j^{n}(s,y))^{2}+ \alpha_0(j^{n-1}(s,y))^{2} dy.
        \end{align}
 By summing these inequalities from $1$ to $n$, we further derive, 
        \begin{align}
            V_{n}^{\prime}(s) \le& \int_0^1\alpha_1(j^{n}(s,y))^{2}dy + \int_0^1\alpha\sum_{k=1}^{n-1}(j^k(s,y))^2 + \beta,\notag \\\le&\int_0^1\alpha\sum_{k=1}^{n}(j^k(s,y))^2 + \beta, \label{Vn general formula}
        \end{align}
where $\alpha=\alpha_1+\alpha_0$. Recalling 
      \begin{align}\label{Vn summation form}
            V_n(s)\!=\!\int_0^1\!\sum_{k=1}^{n}(j^k(s,y))^2\!+\!\gamma (g^k(s,y))^2dy, ~n\ge1,
        \end{align}
we finally obtain
 \begin{align}
      V_{n}^{\prime}(s)   \le \alpha V_n(s)+\beta.
 \end{align}
Therefore, 
\begin{equation}
    V_n(s)\le\frac{\beta}{\alpha}(\mathrm{e}^{\alpha s}-1),
\end{equation}
implying that the sequence $\{V_n(s)\}$ has an upper bound and is thus convergent, then by the dominated convergence theorem \cite{chen2024backstepping}, the series in \eqref{J G sum} converge and admit a solution in  $ L^2(\mathcal{D}_2)$. This completes the proof of Proposition~\ref{well-posedness J G L}.
        \hfill $\blacksquare$
    \end{pf}

Note that the infinite induction energy series method used in the second case is also applicable to the first case. Nevertheless, the method of characteristics provides a continuous solution so we use it to address the first case.

    \begin{rem}\label{statement on assumption}
        Without Assumption~\ref{assumption}, the two boundary conditions become incompatible. As discussed in \cite{hu2016control} and \cite{hu2019boundary}, such incompatibility along the characteristic line still permits the existence of piecewise continuous kernel solutions. Therefore, Assumption~\ref{assumption} can be relaxed, allowing piecewise continuous solutions for the kernel equations \eqref{K}-\eqref{F(x,0)} and \eqref{J}-\eqref{G(s,0)}.
    \end{rem}

\section{Stability analysis}\label{stability analysis}
    In this section, we prove the main result Theorem~\ref{main result}.    
    The proof relies on the stability of the target system and the norm equivalence between the original and the target systems. 
   
    \subsection{Stability of target system}
    \begin{lem}\label{target stability}
        Consider the target system \eqref{target system} with initial conditions $w(x,0)=w_0$, $z(s,0)=z_0$, $p_0\in L^2(0,1)$ being compatible with the boundary conditions. Then the target system is exponentially stable in the $L^2$ sense at the origin, satisfying
        \begin{align}
            V_1(t)\le C_1\mathrm{e}^{-\sigma_1 t}V_1(0),
        \end{align}
        where $C_1$ and $\sigma_1$ are positive constants, and 
        \begin{align}
        V_1(t)=\|w(\cdot,t)\|^2+\|z(\cdot,t)\|^2+\|p(\cdot,t)\|^2. \label{V2}
        \end{align}
        Additionally the origin is reached in finite time $t=t_\mathrm{F}$, given by \eqref{tF}.  
    \end{lem}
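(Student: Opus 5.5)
The target system \eqref{target system} is a cascade of three pure transport equations, so I plan to handle both assertions explicitly: finite-time convergence via the method of characteristics, and the exponential estimate via a weighted Lyapunov functional. For exponential stability I would take
\begin{align*}
W(t)=\int_0^1 \mathrm{e}^{-\delta x}w^2(x,t)dx+a\int_0^1 \mathrm{e}^{\delta s}z^2(s,t)ds+b\int_0^1\mathrm{e}^{\delta s}p^2(s,t)ds,
\end{align*}
with constants $a,b,\delta>0$ to be chosen. Differentiating along \eqref{tar w}, \eqref{tar z}, \eqref{tar p} and integrating by parts will give three types of terms. First, interior terms of the form $-\delta\times$(each weighted norm), up to the factors $1/D_1$ and $1/D_2$. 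Second, the boundary terms $-\mathrm{e}^{-\delta}w^2(1,t)+w^2(0,t)$, $\frac{a}{D_1}(\mathrm{e}^{\delta}z^2(1,t)-z^2(0,t))$ and $\frac{b}{D_2}(\mathrm{e}^{\delta}p^2(1,t)-p^2(0,t))$. Third, the boundary conditions $z(1,t)=0$, $w(0,t)=z(0,t)$ and $p(1,t)=w(1,t)$ feed these boundary terms back into one another.

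The inequality $W'\le-\sigma_1' W$ then requires two choices. I need $a\ge D_1$ so that the $z^2(0,t)$ contribution from $w$ is absorbed. I also need $b\,\mathrm{e}^{\delta}/D_2\le \mathrm{e}^{-\delta}$, i.e. $b\le D_2\mathrm{e}^{-2\delta}$, so that the $p^2(1,t)=w^2(1,t)$ term is absorbed by $-\mathrm{e}^{-\delta}w^2(1,t)$. Both choices are clearly feasible for any fixed $\delta>0$. This yields $W(t)\le \mathrm{e}^{-\sigma_1' t}W(0)$. Since $W$ is equivalent to $V_1$ with constants depending only on $\delta,a,b$, the estimate $V_1(t)\le C_1\mathrm{e}^{-\sigma_1 t}V_1(0)$ follows.

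For the finite-time claim I would argue by explicit solutions along characteristics, following the cascade order $z\to w\to p$.
\begin{enumerate}
\item From \eqref{tar z} and \eqref{tar z boundary}, $z(s,t)=z_0(s+t/D_1)$ for $s+t/D_1<1$ and $z(s,t)=0$ otherwise. Hence $z\equiv0$ for $t\ge D_1$.
\item From \eqref{tar w} and \eqref{tar w boundary}, $w(x,t)=z(0,t-x)$ for $t\ge x$. Hence $w\equiv0$ once $t\ge 1+D_1$.
\item From \eqref{tar p} and \eqref{tar p boundary}, $p(s,t)=w(1,t-D_2(1-s))$. Hence $p\equiv0$ for $t\ge 1+D_1+D_2=t_\mathrm{F}$.
\end{enumerate}
At the $L^2$ level these identities are understood in the usual mild/weak sense for $L^2$ initial data; compatibility is only needed for classical solutions.

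The only mildly delicate step is choosing the Lyapunov weights so that the cross boundary terms close. In particular, the $p$-weight must be increasing in $s$, because $p$ is transported toward $s=0$ and is fed at $s=1$, and $b$ must be kept small relative to $\mathrm{e}^{-2\delta}D_2$. Everything else is routine.
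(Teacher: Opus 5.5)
Your proof is correct. The finite-time part (explicit solutions along characteristics in the cascade order $z\to w\to p$, vanishing from $t=D_1$, $t=1+D_1$ and $t=1+D_1+D_2$ on) is exactly the paper's argument.

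Where you differ is the exponential estimate. The paper only asserts that exponential stability ``follows directly'' from finite-time extinction. You prove it independently with the weighted functional $W$, and your boundary bookkeeping checks out:
\begin{itemize}
\item $a\ge D_1$ removes the $z^2(0,t)$ term;
\item $b\le D_2\mathrm{e}^{-2\delta}$ removes the $w^2(1,t)$ term;
\item together these give $W'\le-\delta\min\{1,1/D_1,1/D_2\}\,W$.
\end{itemize}

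The paper's route is shorter, but strictly it needs a fact it leaves unstated: a uniform bound $V_1(t)\le M V_1(0)$ on $[0,t_\mathrm{F}]$. This is immediate from the same explicit formulas, since each state is a shifted copy of initial data, with $M$ depending only on $D_1,D_2$. It then yields $V_1(t)\le M\mathrm{e}^{\sigma t_\mathrm{F}}\mathrm{e}^{-\sigma t}V_1(0)$ for any $\sigma>0$.

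Your Lyapunov argument supplies that bound and the decay at once, with explicit constants and an arbitrary rate $\delta$, which is consistent with finite-time extinction. The cost is that it is redundant once the characteristic solution is in hand. It also needs the usual density argument to justify differentiating $W$ when the data are only in $L^2$.
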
 
    \begin{pf}
        The solution of \eqref{target system} is given by
        \begin{align*}
            z\!=&z_0(s\!+\!\frac{t}{D_1}),~t<\! \phi_1(s);~z=0,~ t\ge \!\phi_1(s), \notag \\
            w\!=&w_0(x-t),~ t< x;~w=z(0,t-x),t\ge x, \notag \\
            p\!=&p_0(s\!+\!\frac{t}{D_2}),~t\!<\! \phi_2(s),~p\!=\!w(1,t\!-\!\lambda_2(s)),~t\!\ge \!\phi_2(s),
        \end{align*}
        where $\phi_i(s)=D_i(1-s)$, $i=1,\; 2$. It is clear that $z\equiv0$ for $t\ge D_1$. Consequently, $w$ vanishes for $t\ge D_1+1$, and $p$ reaches the origin for $t\ge D_2+D_1+1$. Therefore, the state reaches the origin within a finite time $t_\mathrm{F}$, from which exponential stability follows directly. This completes the proof. \hfill $\blacksquare$
    \end{pf}

    \subsection{Inverse transformation and norm equivalence}
       We postulate the inverse transformation, 
      \begin{align}\label{inverse trans}
        \begin{bmatrix}
            p \\ u \\ v
        \end{bmatrix}\!\!=\!&
        \begin{bmatrix}
            I & \mathbf{0} & \mathbf{0} \\
            \breve{\mathscr{F}}_1 & \breve{\mathscr{L}}_1+I & \mathbf{0} \\
            \breve{\mathscr{F}}_2 & \breve{\mathscr{F}}_3 & \breve{\mathscr{L}}_2+I
        \end{bmatrix}\!\!
        \begin{bmatrix}
            p \\ w \\ z
        \end{bmatrix}
    \end{align}      
    where $\breve{\mathscr{F}}_1: p\mapsto \int_{0}^{1}\breve F(x,y)p(y)dy$, $\breve{\mathscr{F}}_2: p\mapsto \int_{0}^{1}\breve G(s,y)\\p(y)dy$, $\breve{\mathscr{F}}_3: w \mapsto\int_{0}^{1}\breve J(s,y)w(y)dy$, $\breve{\mathscr{L}}_1: w\mapsto \int_{x}^{1} \!\breve K(x,y)w(y)dy$, and $\breve{\mathscr{L}}_2: z\mapsto \int_{0}^{s} D_1\breve J(s-y,0)z(y)dy$, $\breve K$, $\breve F$, $\breve J$, and $\breve G$ are defined on the same domains as $K$, $F$, $J$, and $G$, respectively.  
    
    Inserting \eqref{inverse trans} into \eqref{u trans} and \eqref{v trans}, we obtain
    \begin{align}
        \breve K(x,y)=&K(x,y)+\int_x^yK(x,\theta)\breve K(\theta,y)d\theta, \label{breve K}\\
        \breve F(x,y)=&F(x,y)+\int_x^1K(x,\theta)\breve F(\theta,y)d\theta,\label{breve F}
    \end{align}
    and
     \begin{align}
        \breve J(s,y)=&J(s,y)+\int_0^yJ(s,\theta)\breve K(\theta,y)d\theta \notag \\
        &+\int_0^sD_1J(s-\theta,0)\breve J(\theta,y)d\theta,\\
        \breve G(s,y)=&G(s,y)+\int_0^1J(s,\theta)\breve F(\theta,y)d\theta \notag \\
        &+\int_0^sD_1J(s-\theta,0)\breve G(\theta,y)d\theta.
    \end{align}
    It is clear the above equations of the inverse kernel functions are the second-kind Volterra equations \cite{linz1985analytical}, implying that the inverse kernels exist and  are continuous, given the direct kernel $K$, $F$, $J$ and $G$. 

    \begin{lem}\label{norm equivalence}
        Consider the transformation \eqref{u trans}-\eqref{v trans} with its inverse \eqref{inverse trans}. There exist $\kappa_i>0$, $i=1,\;2,\;3,\;4$ such that
        \begin{subequations}
        \begin{align}
            \|w(\cdot,t)\|^2\le&\kappa_1\left(\|u(\cdot,t)\|^2+\|p(\cdot,t)\|^2 \right),\label{w estimate}\\
            \|z(\cdot,t)\|^2\le&\kappa_2\left(\|v(\cdot,t)\|^2\!+\!\|u(\cdot,t)\|^2\!+\!\|p(\cdot,t)\|^2 \right),   \label{z estimate}  \\       
            \|u(\cdot,t)\|^2\le&\kappa_3\left(\|w(\cdot,t)\|^2+\|p(\cdot,t)\|^2 \right),\label{u estimate}\\
            \|v(\cdot,t)\|^2\le&\kappa_4\left(\|z(\cdot,t)\|^2\!+\!\|w(\cdot,t)\|^2\!\!+\!\|p(\cdot,t)\|^2 \right).\label{v estimate}
        \end{align}
        \end{subequations}
    \end{lem}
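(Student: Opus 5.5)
The plan is to read each estimate directly off the explicit integral form of the direct transformation \eqref{u trans}--\eqref{v trans} or of its inverse \eqref{inverse trans}. Each line is then bounded with the elementary inequality $(a_1+\dots+a_m)^2\le m(a_1^2+\dots+a_m^2)$ followed by the Cauchy--Schwarz inequality. The only inputs needed are square-integrability (in fact boundedness where available) of the kernels. For $K$ and $F$ this comes from \cite[Prop. 1]{qi2021output}. For $J$ and $G$ it comes from Proposition~\ref{well-posedness J G L}, which gives $J,G\in L^2(\mathcal{D}_2)$, and continuity when $D_1\le D_2$. For the inverse kernels $\breve K,\breve F,\breve J,\breve G$ it comes from the second-kind Volterra equations derived above.

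For \eqref{w estimate}, I square \eqref{u trans}, which has three terms, and apply Cauchy--Schwarz to each integral:
\begin{align*}
\Bigl(\int_x^1K(x,y)u(y)dy\Bigr)^2\le\int_0^1K^2(x,y)dy\,\|u\|^2,
\end{align*}
and similarly for the $F$ term. Integrating in $x$ gives $\|w\|^2\le 3\bigl(1+\|K\|_{L^2(\mathcal{D}_1)}^2\bigr)\|u\|^2+3\|F\|_{L^2(\mathcal{D}_2)}^2\|p\|^2$, so I can take $\kappa_1=3(1+\|K\|^2_{L^2}+\|F\|^2_{L^2})$.

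For \eqref{z estimate}, I square \eqref{v trans}, which has four terms. The Fredholm terms in $J$ and $G$ are handled exactly as above and yield $\|J\|^2_{L^2(\mathcal{D}_2)}\|u\|^2$ and $\|G\|^2_{L^2(\mathcal{D}_2)}\|p\|^2$. The Volterra convolution term $\int_0^sD_1J(s-y,0)v(y)dy$ needs a bound on the trace $J(\cdot,0)$, which is a one-dimensional object. When $D_1\le D_2$, $J$ is continuous and $|J(\cdot,0)|\le\|J\|_\infty$. In general I will use Young's convolution inequality, $\|J(\cdot,0)*v\|\le\|J(\cdot,0)\|_{L^1(0,1)}\|v\|$. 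This requires $J(\cdot,0)\in L^1$, or better $L^2$, which I would obtain from the series representation \eqref{J G sum}. Each $j^n$ is continuous by Lemma~\ref{well-posedness M}, and the same energy argument in the proof of Proposition~\ref{well-posedness J G L} controls the boundary flux term $D_1(j^n(s,0))^2$. Integrating $V_n'$ over $s$, the dissipated term $D_1\int_0^1(j^n(s,0))^2ds$ is bounded by $V_n(0)+\int_0^1(\alpha V_n+\beta)ds$. Retaining that term in the computation rather than discarding it therefore yields $\sum_n\|j^n(\cdot,0)\|^2_{L^2(0,1)}<\infty$, and hence $J(\cdot,0)\in L^2(0,1)$.

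The inverse estimates \eqref{u estimate}--\eqref{v estimate} follow by the identical computation applied to \eqref{inverse trans}, with the inverse kernels in place of the direct ones. I expect the main obstacle to be justifying that these inverse kernels are square-integrable, together with the trace $\breve J(\cdot,0)$, in the case $D_1>D_2$, where only $L^2$ regularity of $J,G$ is available. My approach is to first set $y=0$ in the $\breve J$ equation. This gives a scalar Volterra equation of the second kind for $\breve J(\cdot,0)$ with $L^2$ convolution kernel $D_1J(\cdot,0)$, whose resolvent exists in $L^2(0,1)$ by the standard Neumann series argument \cite{linz1985analytical}. With $\breve J(\cdot,0)$ in hand, the equations for $\breve J$ and $\breve G$ become Volterra equations in $s$ with $L^2$ forcing terms $J+\int J\breve K$ and $G+\int J\breve F$. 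Here $\breve K,\breve F$ are continuous, since $K,F$ are, so the same resolvent yields $\breve J,\breve G\in L^2(\mathcal{D}_2)$. The constants $\kappa_3,\kappa_4$ then follow exactly as $\kappa_1,\kappa_2$ did.
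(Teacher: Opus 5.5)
Your main computation matches the paper's proof: expand each line of \eqref{u trans}--\eqref{v trans} or \eqref{inverse trans}, use $(a_1+\dots+a_m)^2\le m\sum_i a_i^2$ and Cauchy--Schwarz, and absorb the kernel norms into the $\kappa_i$. The paper writes this out only for $w$. For $D_1\le D_2$ your argument is complete.

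The gap is in the extra step you add to control the trace $J(\cdot,0)$ when $D_1>D_2$. Both your Young-inequality bound for $z$ and your resolvent construction for $\breve J(\cdot,0)$ depend on it. Keeping the dissipation term does give $\sum_n\|j^n(\cdot,0)\|^2_{L^2(0,1)}<\infty$. But square-summability of the norms does not imply that $\sum_n j^n(\cdot,0)$ converges at all (take $j^n(\cdot,0)\equiv 1/n$); you would need $\sum_n\|j^n(\cdot,0)\|<\infty$. The proof of Proposition~\ref{well-posedness J G L} makes the same inference when it passes from bounded $V_n$ to convergence of \eqref{J G sum}, so you cannot borrow it from there. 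A minor point: not every $j^n$ is continuous. At $s=D_2/D_1$, $g^0(\cdot,1)$ drops from $F(0,0)$ to $0$ and $g^1(\cdot,1)$ jumps up by $F(0,0)$. So $j^0$ and $j^1$ are only bounded and piecewise continuous, cf.\ Remark~\ref{boundary compatibility}.

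The repair is short and gives bounded kernels. By \eqref{eq:gn}, $g^n(s,1)=D_2\int_0^1 j^{n-1}(s-\frac{D_2}{D_1},\theta)c(\theta)d\theta$ for $s\ge D_2/D_1$ and $g^n(s,1)=0$ otherwise. The $j$-equation is causal in $s$: zero initial data and zero boundary data on $[0,s^*)$ force $j\equiv0$ there, by Gr\"onwall applied to $\int_0^1 j^2dy$. Induction then gives $j^n\equiv0$ for $s<nD_2/D_1$. Hence on $[0,1]$ only the terms with $n\le D_1/D_2$ are nonzero, \eqref{J G sum} is a finite sum of bounded functions, and $J(\cdot,0)\in L^\infty(0,1)$.

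Alternatively, use the method of steps in $s$ with step $D_2/D_1$:
\begin{itemize}
\item On the first strip, $J(s,1)=\frac{1}{D_2}F(0,1-\frac{D_1}{D_2}s)$ is explicit.
\item On each later strip, \eqref{non-trivial} supplies $J(s,1)$ from the previous strip.
\item The two branches agree at $s=D_2/D_1$, because $\frac{1}{D_2}F(0,0)=\int_0^1K(0,y)c(y)dy$ (using $c(0)=0$).
\item The argument of Lemma~\ref{well-posedness M}, applied strip by strip, gives $J\in C(\mathcal{D}_2)$, and then $G\in C(\mathcal{D}_2)$ by \eqref{G A1 part}--\eqref{G A2 part}.
\end{itemize}
With bounded direct kernels, the inverse kernels are continuous by classical second-kind Volterra theory. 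Your estimates then close with sup-norm constants exactly as in the paper. Your $L^1$-resolvent and Young-inequality machinery remains valid but becomes unnecessary.
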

    \begin{pf}
        From \eqref{u trans}, using the Cauchy-Schwarz inequality and omitting time-dependence, one gets
        \begin{align}
            \|w(\cdot,t)\|^2
            \!\le&3\|u(\cdot,t)\|^2\!+\!3\int_0^1\left[\int_{x}^{1} K(x,y)u(y)dy\right]^2dx \notag \\
            &+3\int_0^1\left[\int_{0}^{1} F(x,y)p(y)dy\right]^2dx\notag \\
            \!\le&3\|u(\cdot,t)\|^2\!+\!\int_0^1\!\!\!\int_{x}^{1} \!3K^2(x,y)dy\!\!\int_{x}^{1}\!\! u^2(y)dydx \notag \\
            &+\int_0^1\!\!\!\int_{0}^{1} \!3F^2(x,y)dy\!\!\int_{0}^{1}\!\! p^2(y)dydx \notag \\
            \!\le&3(1\!+\!\bar K^2)\|u(\cdot,t)\|^2\!+\!3\bar F^2\|p(\cdot,t)\|^2,\!
        \end{align}
        where $\bar K^2\!\!=\!\!\int_0^1\!\!\int_{x}^{1} \!K^2(x,\!y)dydx$, $\bar F^2\!\!=\!\!\int_0^1\!\!\int_{0}^{1} \!F^2(x,\!y)dydx$. The remaining inequalities are proved similarly using \eqref{v trans} and \eqref{inverse trans}. \hfill $\blacksquare$
    \end{pf}
    
    Combining Lemma~\ref{target stability} and Lemma~\ref{norm equivalence} concludes the proof of Theorem~\ref{main result}.

\section{Numerical simulation}\label{numerical simulation}
    To validate the effectiveness of the proposed controller \eqref{controller}, we present a numerical example. In the example, the plant coefficients of \eqref{extended system} are set as $c(x)=\sin \pi x$, $f(x,y)=\mathrm{e}^{x+y^2}$, and two cases of delays are considered, $(D_1,~D_2)=(1,~2)$ for case (i), $(D_1,~D_2)=(2,~1)$ for case (ii).

    The finite difference method is employed to numerically solve the PDE system \eqref{extended system}, where the temporal and spatial step sizes are sets as: $\Delta t=0.001$, $\Delta x=0.01$. The initial conditions are $u_0(x)=\sin 2\pi x$, $v_0(s)=0.02\sin \pi s$, $p_0(s)=\cos 3\pi s/2$, satisfying the compatible conditions. The controller \eqref{controller} is numerically implemented applying the Simpson method using the gain kernels $J(s,y)$ and $G(s,y)$ which are numerically solved by the finite difference method with discretized step size $\Delta s=0.01$.

    The system is open-loop unstable, as shown in Fig. \ref{subfig:u open}. Fig. \ref{fig:u close} illustrates the closed-loop simulation results. The results demonstrate that the closed-loop system is successfully stabilized under the delay-compensated controller \eqref{controller}, whereas it diverges when applying the nominal controller (see \cite{krstic2008backstepping}), as evidenced in Fig.~\ref{subfig:u uncomp}. Observe that the open-loop system diverges more slowly. This occurs because, in first-order hyperbolic PDEs, instantaneous recirculation has a more destabilizing effect than delayed recirculation due to their transport properties.
    \begin{figure}[htpb]
        \begin{center}
            \begin{subfigure}[b]{0.49\linewidth}
                \includegraphics[width=\linewidth]{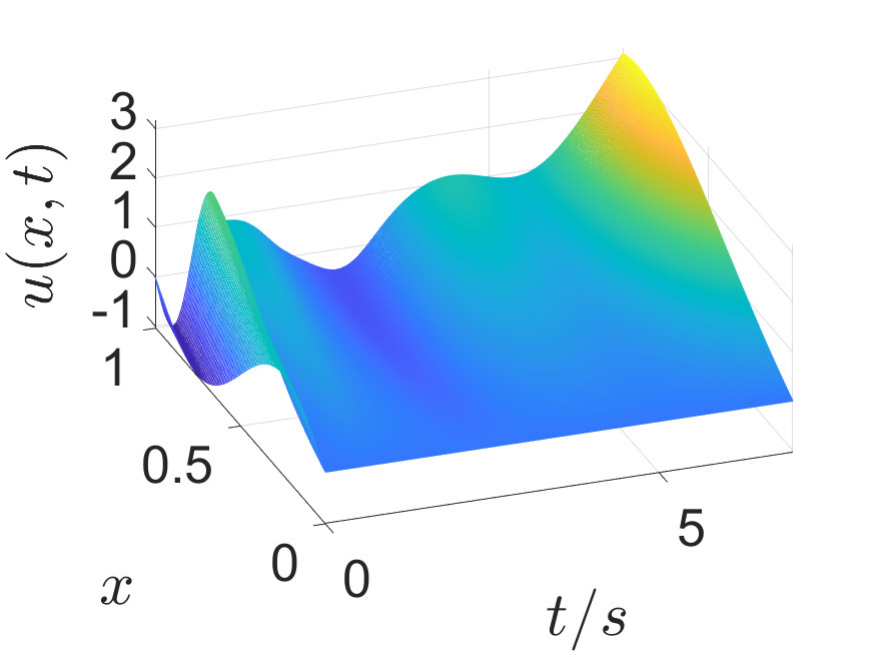}
                \caption{}
                \label{subfig:u open}
            \end{subfigure}
            \hfill 
            \begin{subfigure}[b]{0.49\linewidth}
                \includegraphics[width=\linewidth]{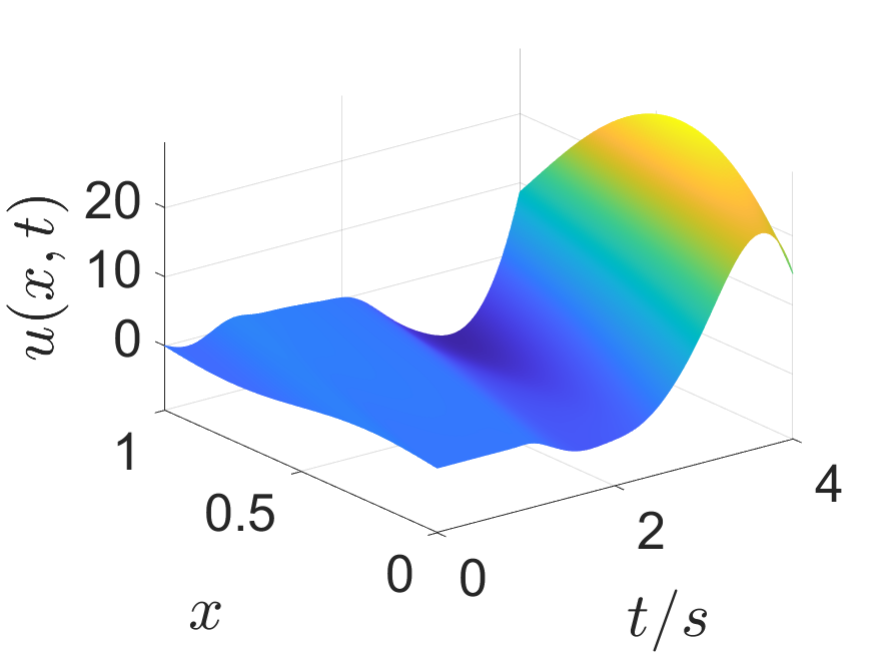}
                \caption{}
                \label{subfig:u uncomp}
            \end{subfigure}
        \end{center}
        \caption{The dynamics of the state $u(x,t)$. \subref{subfig:u open}: in open-loop configuration. \subref{subfig:u uncomp}: delay uncompensated.}
        \label{fig:u open, u uncomp}
    \end{figure}
    \begin{figure}[htpb]
        \begin{center} 
            \begin{subfigure}[b]{0.49\linewidth}
                \includegraphics[width=\linewidth]{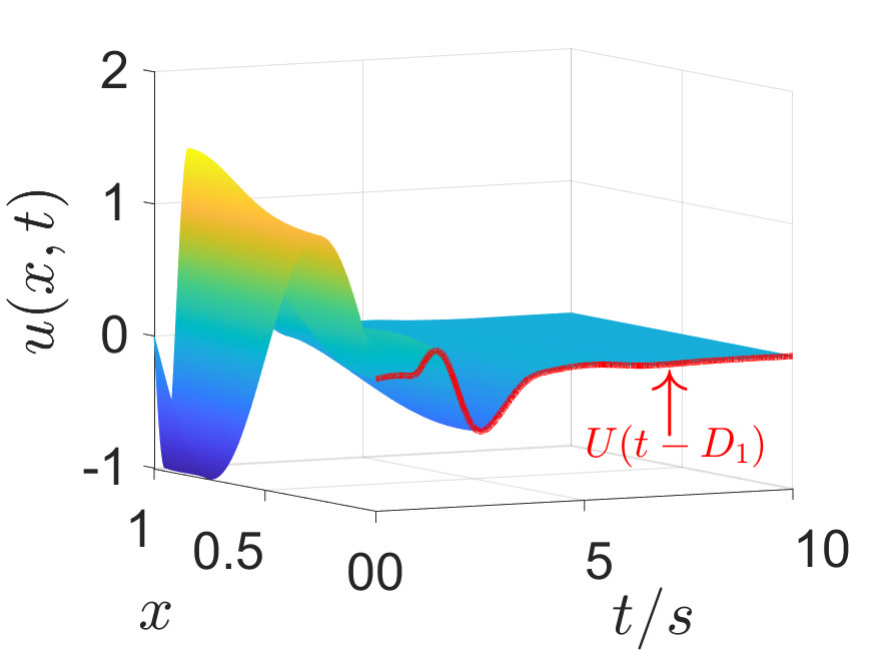}
                \caption{}
                \label{subfig:u close 1}
            \end{subfigure}
            \hfill
            \begin{subfigure}[b]{0.49\linewidth}
                \includegraphics[width=\linewidth]{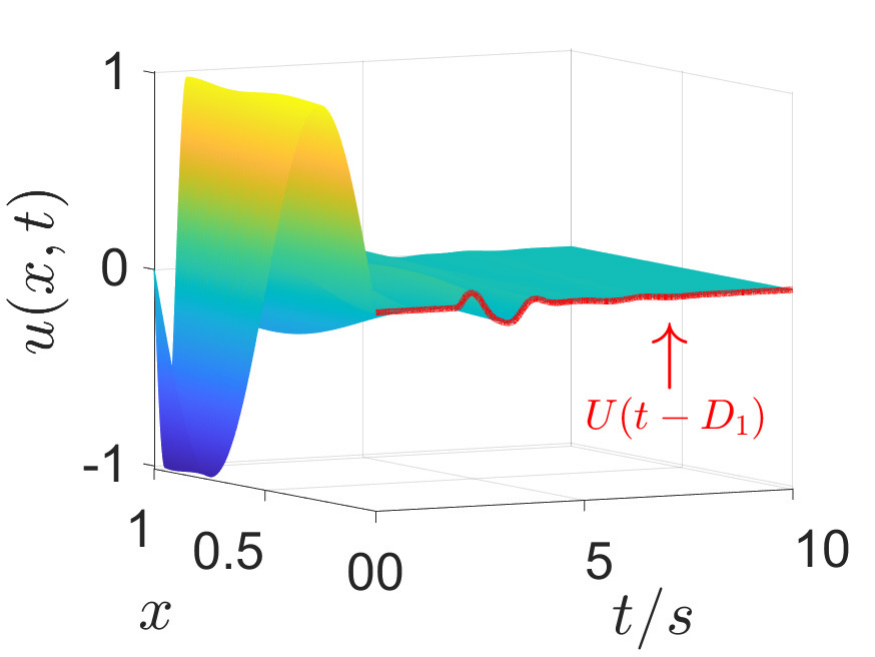}
                \caption{}
                \label{subfig:u close 2}
            \end{subfigure}                        
        \end{center}
        \caption{The dynamics of the state $u(x,t)$ with delay compensation in two delay cases. \subref{subfig:u close 1}: $D_1=1$, $D_2=2$. \subref{subfig:u close 2}: $D_1=2$, $D_2=1$.}
        \label{fig:u close}
    \end{figure}

    For a clearer illustration, we plot the $L^2$ norm of the state $u(x,t)$ under the delay-compensated controller in two cases in Fig. \ref{subfig:u norm}, and the time evolution of the boundary controller in Fig. \ref{subfig:controller}.
     \begin{figure}
        \begin{center}
            \begin{subfigure}[b]{\linewidth}
                \includegraphics[width=\linewidth]{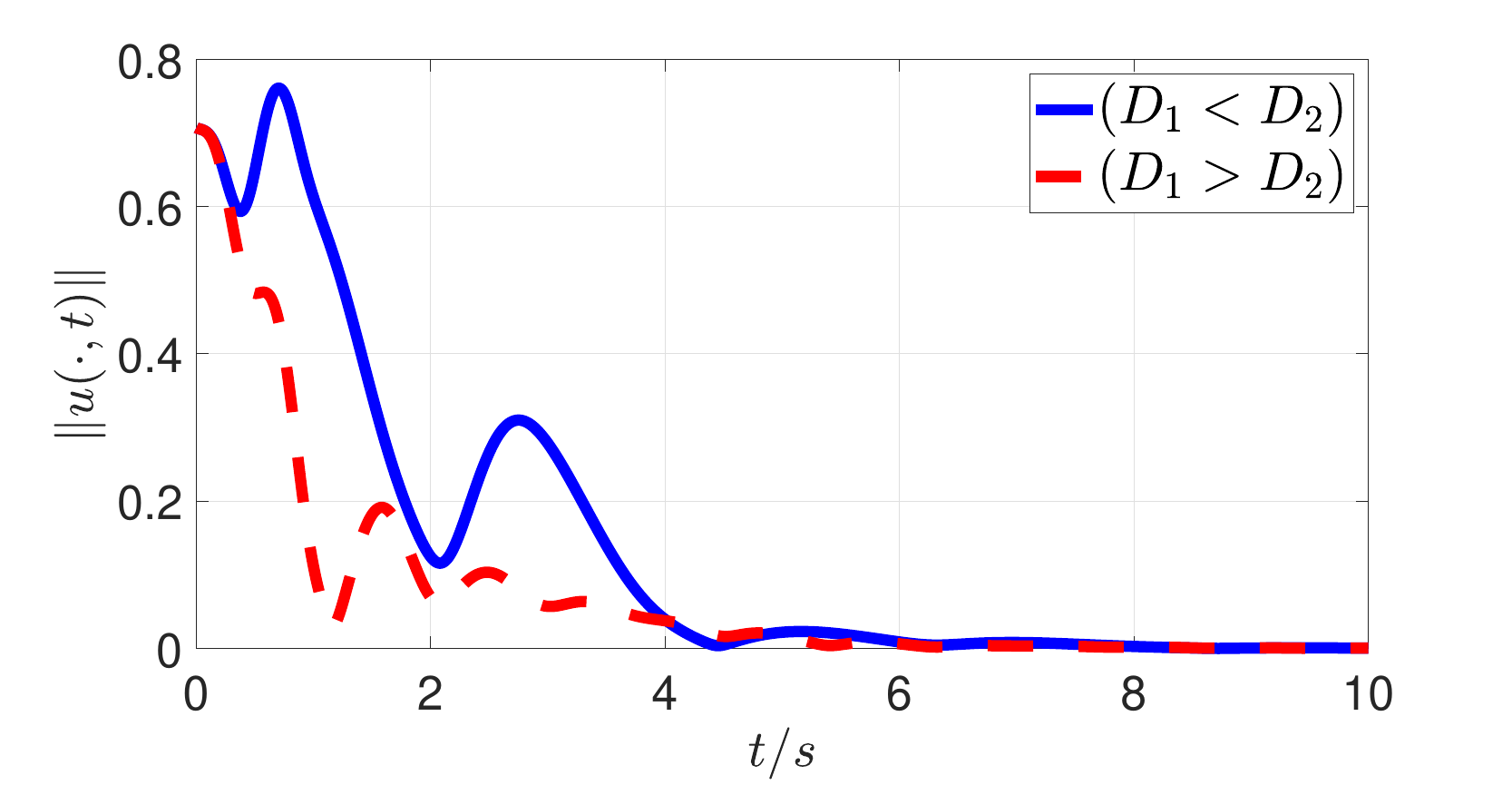}
                \caption{}
                \label{subfig:u norm}
            \end{subfigure}

            \begin{subfigure}[b]{\linewidth}
                \includegraphics[width=\linewidth]{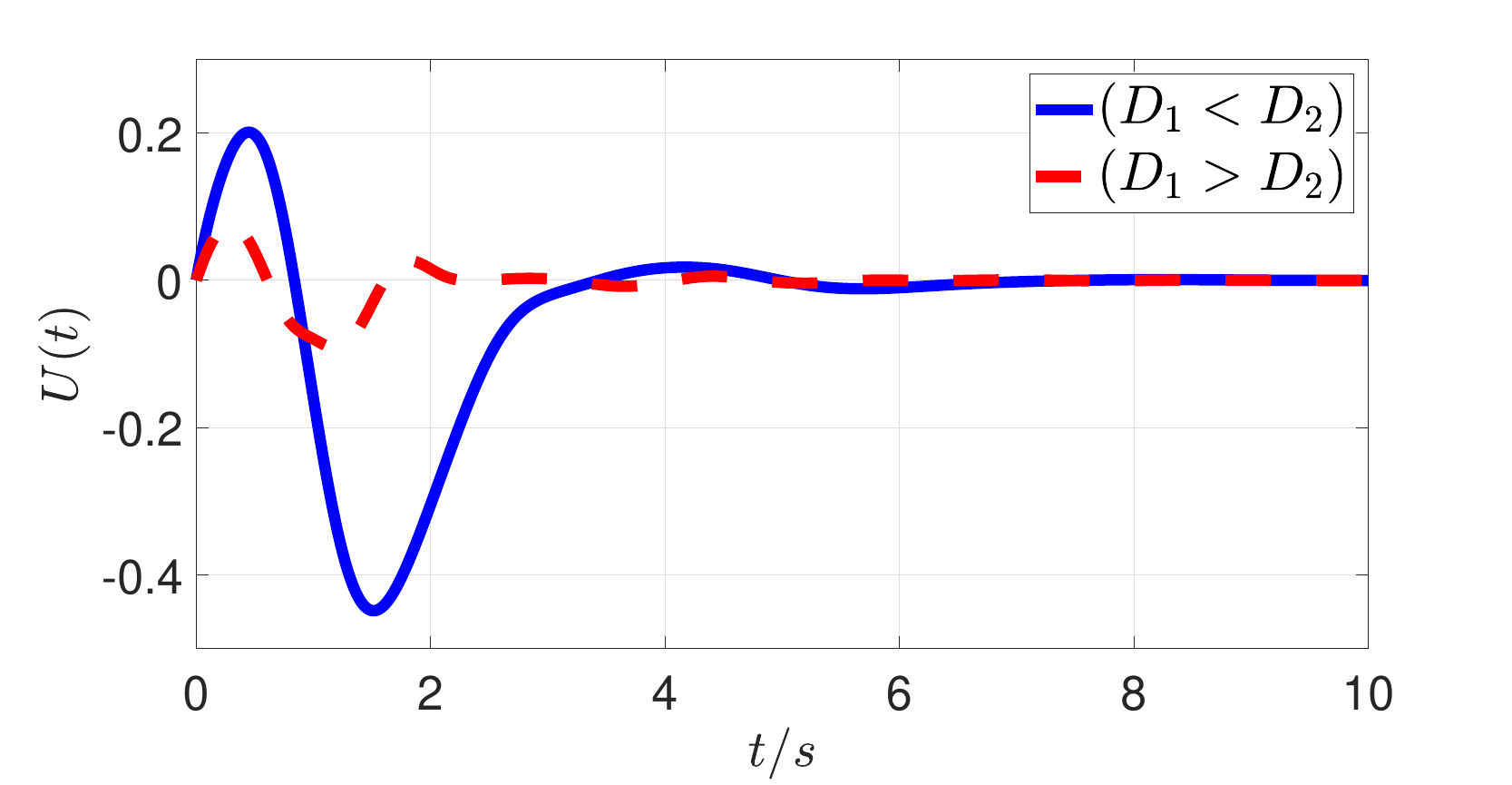}
                \caption{}
                \label{subfig:controller}
            \end{subfigure}
        \end{center}
        \caption{The $L^2$ norm of the state $u(x,t)$ and evolution of the controller with delay compensation in two delay cases. \subref{subfig:u norm}: state norm $\|u(\cdot,t)\|$, \subref{subfig:controller}: controller $U(t)$.}
        \label{fig:u norm, controller}
     \end{figure}
    
    Fig. \ref{fig:mismatch result} presents the simulation result for the system with delay mismatch in case (i) ($(D_1,~D_2)=(1,~2)$). For underestimated-delay cases, i.e., the actual delay is greater than the assumed delay, $\Delta D_1=\Delta D_2=0.1$ are chosen, and for overestimated-delay cases, $\Delta D_1=\Delta D_2=-0.1$. The results show that the controller is robust to certain delay mismatches. 
    \begin{figure}
        \begin{center}
            \begin{subfigure}[b]{\linewidth}
                \includegraphics[width=\linewidth]{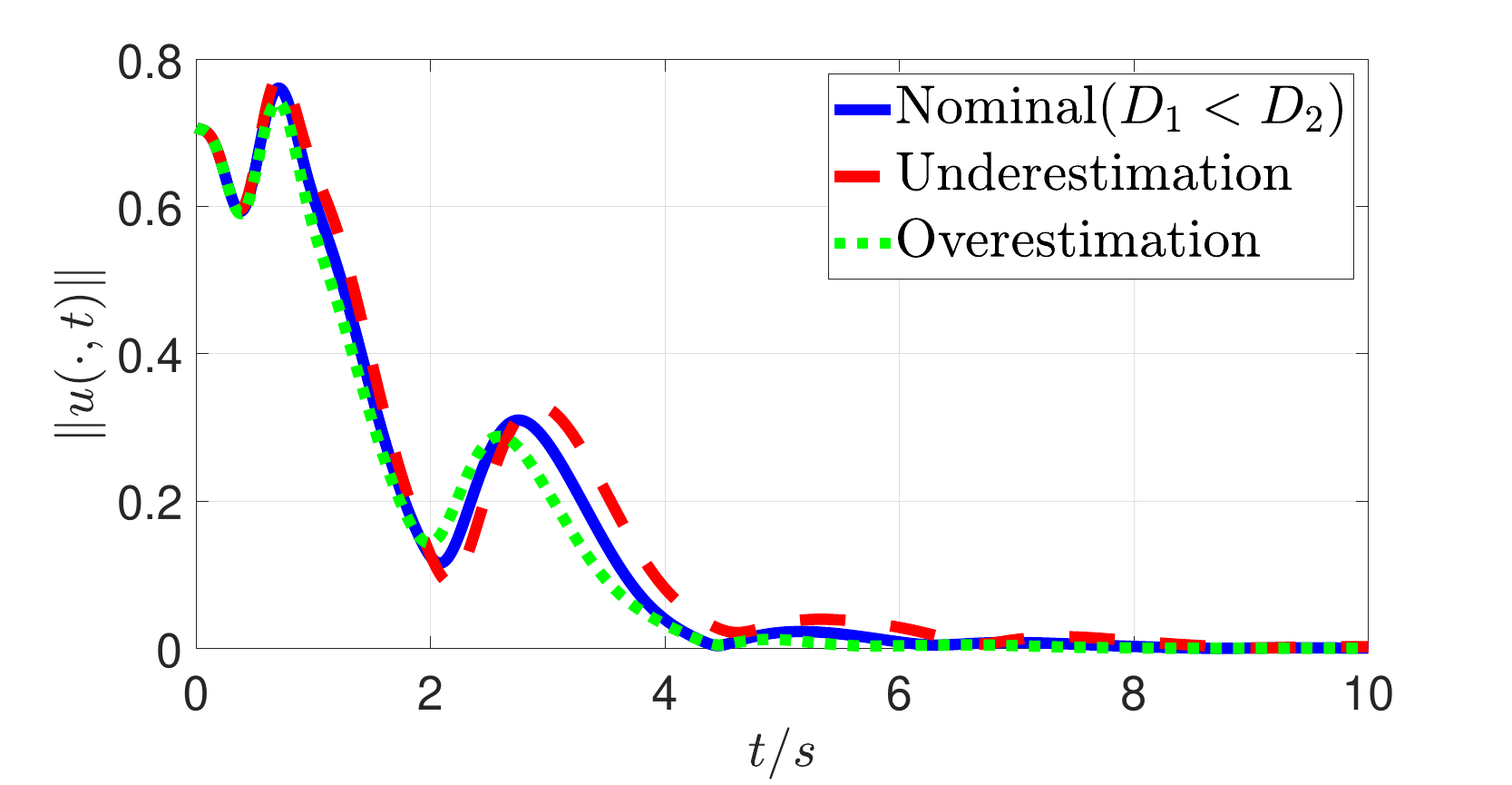}
                \caption{}
                \label{subfig:u norm mismatch}
            \end{subfigure}

            \begin{subfigure}[b]{\linewidth}
                \includegraphics[width=\linewidth]{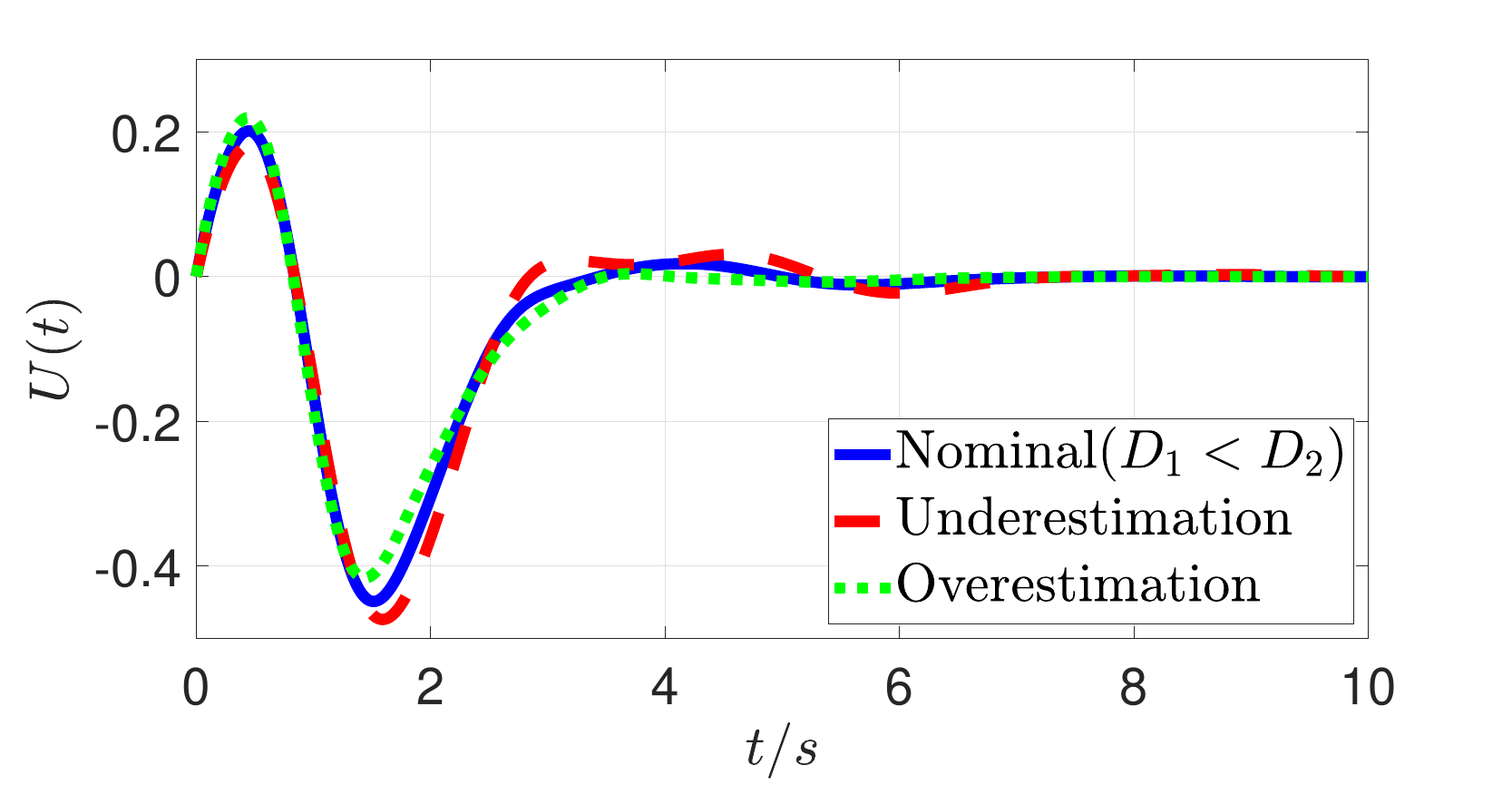}
                \caption{}
                \label{subfig:controller mismatch}
            \end{subfigure}
        \end{center}
        \caption{The $L^2$ norm of the states $u(x,t)$ and evolution of the controller with delay mismatch under nominal delay values $D_1=1$, $D_2=2$. Dashed line: underestimation. Dotted line: overestimation. \subref{subfig:u norm mismatch}: state norm $\|u(\cdot,t)\|$, \subref{subfig:controller mismatch}: controller $U(t)$.}
        \label{fig:mismatch result}
    \end{figure}

\section{Conclusion}\label{conclusion}
    A backstepping-based control strategy is developed to stabilize systems of linear hyperbolic PIDEs subject to simultaneous and arbitrarily large input and state delays. We design two invertible integral transformations, each involving both Fredholm- and Volterra-type integrals. For the resulting kernel equations in the form of a system of four coupled PIDEs, we divide the well-posedness analysis into two cases depending on whether input delay is smaller or larger than state delay, which are solved by employing the successive approximation method and the infinite induction energy series method, respectively. The finite-time stability of the closed-loop system is established. Finally, a numerical example is provided. Future work will consider uncertain delay and the more general case where state delays appear in the source term.

\bibliographystyle{elsarticle-num}
\bibliography{refs}

\end{document}